\newtheorem{theorem}{Theorem}[section]
\newtheorem{lemma}[theorem]{Lemma}
\newtheorem{corollary}[theorem]{Corollary}
\newcommandx{\Pull}{\textsc{Pull}\xspace}
\newcommandx{\Pullk}[2][1=$k$, 2=?]{\textsc{Pull{#2}}\text{-}#1\xspace}
\newcommandx{\PullkF}[2][1=$k$, 2=?]{\textsc{Pull{#2}}\text{-}#1\text{F}\xspace}
\newcommandx{\PullkFG}[2][1=$k$, 2=?]{\textsc{Pull{#2}}\text{-}#1\text{FG}\xspace}
\newcommandx{\PullkW}[2][1=$k$, 2=?]{\textsc{Pull{#2}}\text{-}#1\text{W}\xspace}
\newcommandx{\PullkWG}[2][1=$k$, 2=?]{\textsc{Pull{#2}}\text{-}#1\text{WG}\xspace}
\newcommandx{\PushPull}{\textsc{PushPull}\xspace}
\newcommandx{\PullPull}{\textsc{PullPull}\xspace}
\newcommand{\NCL}{\textsc{Nondeterministic Constraint Logic}\xspace}
\newcommand{\PTC}{\textsc{Planar 3-Coloring}\xspace}
\newcommand{\NP}{\textsc{NP}\xspace}
\newcommand{\PSPACE}{\textsc{PSPACE}\xspace}
\newcommand{\NPSPACE}{\textsc{NPSPACE}\xspace}
\newcommand{\YES}{\textsc{yes}\xspace}
\newcommand{\NO}{\textsc{no}\xspace}
\newcommand{\nlt}{nondeterministic locking 2-toggle\xspace}
\gdef\fps@figure{!htbp}}
\let\realbfseries=\bfseries
\def\bfseries{\realbfseries\boldmath}
\def\emph#1{\textbf{\textit{\boldmath #1}}}
\newif\ifabstract
\newif\iffull
\newcounter{section-preserve}
\newcounter{theorem-preserve}
\newcommand{\blank}[1]{}
\newtoks\magicAppendix
\newtoks\magictoks
\newif\iflater
\long\def\later#1{\iflater#1\else\magictoks={#1}%
	\edef\magictodo{\noexpand\magicAppendix={\the\magicAppendix \par
			\the\magictoks%
	}}
	\magictodo\fi}
\long\def\both#1{\iflater#1\else\magictoks={#1}%
	\edef\magictodo{\noexpand\magicAppendix={\the\magicAppendix \par
			\noexpand\setcounter{theorem-preserve}{\noexpand\arabic{theorem}}%
			\noexpand\setcounter{theorem}{\arabic{theorem}}%
			\noexpand\setcounter{section-preserve}{\noexpand\arabic{section}}%
			\noexpand\setcounter{section}{\arabic{section}}%
			\noexpand\let\noexpand\oldsection=\noexpand\thesection
			\noexpand\def\noexpand\thesection{\thesection}
			\noexpand\let\noexpand\oldlabel=\noexpand\label
			\noexpand\let\noexpand\label=\noexpand\blank
			\the\magictoks%
			\noexpand\setcounter{theorem}{\noexpand\arabic{theorem-preserve}}%
			\noexpand\setcounter{section}{\noexpand\arabic{section-preserve}}%
			\noexpand\let\noexpand\thesection=\noexpand\oldsection
			\noexpand\let\noexpand\label=\noexpand\oldlabel
	}}
	\magictodo
	\the\magictoks\fi}
\def\magicappendix{\latertrue \the\magicAppendix}
\title{\PSPACE-completeness of Pulling Blocks to Reach a Goal}
\author{%
  Joshua Ani%
    \thanks{Massachusetts Institute of Technology, Cambridge, MA, USA}
\and
  Sualeh Asif%
    \footnotemark[1]
\and
  Erik D. Demaine%
    \footnotemark[1]
\and
  Yevhenii Diomidov%
    \footnotemark[1]
\and
  Dylan Hendrickson%
    \footnotemark[1]
\and
  Jayson Lynch%
    \footnotemark[1]
\and
  Sarah Scheffler%
    \thanks{Boston University, Boston, MA, USA}
\and
  Adam Suhl%
    \thanks{Algorand, Boston, MA, USA}
}
\date{}
\begin{document}

\maketitle

\begin{abstract}
We prove \PSPACE-completeness of all but one problem in a large space of pulling-block problems where the goal is for the agent to reach a target destination. The problems are parameterized by whether pulling is optional, the number of blocks which can be pulled simultaneously, whether there are fixed blocks or thin walls, and whether there is gravity. We show \NP-hardness for the remaining problem, \PullkFG[1][?] (optional pulling, strength 1, fixed blocks, with gravity).
\end{abstract}

\section{Introduction}
\label{sec:intro}

In the broad field of \emph{motion planning}, we seek algorithms for actuating
or moving mobile agents (e.g., robots) to achieve certain goals.
In general settings, this problem is PSPACE-complete
\cite{Canny-1988-pspace,Reif-1979-mover},
but much attention has been given to finding simple variants near the threshold
between polynomial time and PSPACE-complete;
see, e.g., \cite{hearn2009games}.
One interesting and well-studied case, arising in warehouse maintenance,
is when a single robot with $O(1)$ degrees of freedom navigates an environment
with obstacles, some of which can be moved by the robot (but which cannot move
on their own).
Research in this direction was initiated in 1988 \cite{Wilfong-1991}.

A series of problems in this space arise from computer puzzle games,
where the robot is the agent controlled by the player,
and the movable obstacles are \emph{blocks}.
The earliest and most famous such puzzle game is \emph{Sokoban},
first released in 1982 \cite{sokoban-wiki}.
Much later, this game was proved PSPACE-complete
\cite{sokoban,hearn2009games}.
In Sokoban, the agent can \emph{push} movable $1 \times 1$ blocks
on a square grid, and the goal is to bring those blocks to target locations.
Later research in \emph{pushing-block puzzles} considered the simpler
goal of simply getting the robot to a target location,
proving various versions NP-hard, NP-complete, or PSPACE-complete
\cite{demainepush,demaine2003pushing,demaine2004pushpush}.

In this paper, we study the \Pull series of motion-planning problems
\cite{Ritt10,PRB16}, where the agent can \emph{pull} (instead of push)
movable $1 \times 1$ blocks on a square grid.
Figure~\ref{fig:example} shows a simple example.
This type of block-pulling mechanic (sometimes together with a block-pushing
mechanic) appears in many real-world video games,
such as Legend of Zelda, Tomb Raider, Portal, and Baba Is You.

\begin{figure}
\centering
\subfloat[Initial state]{\includegraphics[scale=1]{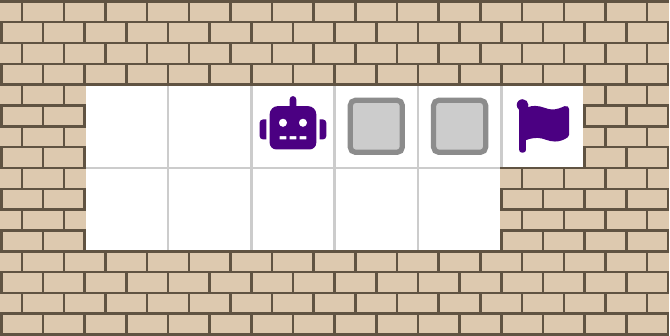}}
\hfill
\subfloat[A strength-1 move]{\includegraphics[scale=1]{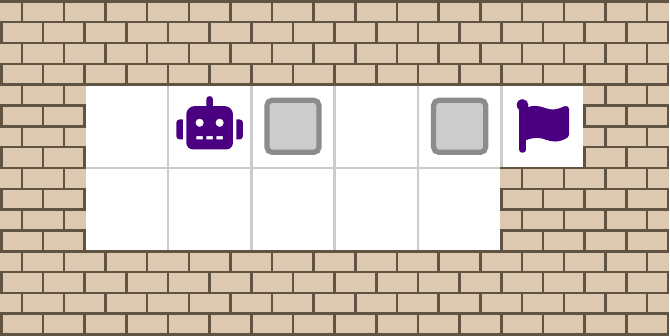}}

\caption{A pulling-block problem.  The robot is the agent, the flag is the goal square, the light gray blocks can be moved, and the bricks are fixed in place.
\sl \href{https://fontawesome.com/icons/robot?style=solid}{Robot} and \href{https://fontawesome.com/icons/flag?style=solid}{flag} icons from \href{https://fontawesome.com}{Font Awesome} under \href{https://creativecommons.org/licenses/by/4.0/}{CC BY 4.0 License}.}
\label{fig:example}
\end{figure}

We study several different variants of \Pull, which can be combined in arbitrary combination:
\begin{enumerate}
\setlength\itemsep{0pt}
\setlength\parskip{0pt}
\item \textbf{Optional/forced pulls:} In \textsc{Pull!}, every agent motion that can also pull blocks must pull as many as possible (as in many video games where the player input is just a direction). In \textsc{Pull?}, the agent can choose whether and how many blocks to pull. Only the latter has been studied in the literature, where it is traditionally called \textsc{Pull}; we use the explicit ``?''\ to indicate optionality and distinguish from \textsc{Pull!}.
\item \textbf{Strength:} In \Pullk[$k$][], the agent can pull an unbroken horizontal or vertical line of up to $k$ pullable blocks at once. In \Pullk[$\ast$][], the agent can pull any number of blocks at once.
\item \textbf{Fixed blocks/walls:} In \PullkF[][], the board may have fixed $1 \times 1$ blocks that cannot be traversed or pulled. In the \PullkW[][], the board may have fixed thin ($1 \times 0$) walls; this is more general because a square of thin walls is equivalent to a fixed block. Thin walls were introduced in \cite{demaine2017push}.
\item \textbf{Gravity:} In \textsc{Pull-G}, all movable blocks fall downward after each agent move. Gravity does not affect the agent's movement.
\end{enumerate}

Table~\ref{tab:results} summarizes our results: for all variants that include fixed blocks or
walls, we prove \PSPACE-completeness for any strength, with optional or forced pulls, and with or
without gravity, with the exception of \PullkFG[1][?] for which we only show \NP-hardness.

\definecolor{header}{rgb}{0.29,0,0.51}
\definecolor{gray}{rgb}{0.85,0.85,0.85}
\def\header#1{\multicolumn{1}{c}{\textcolor{white}{\textbf{#1}}}}
\def\tableref#1{[\S\ref{#1}]}

\begin{table}
    \centering
    \tabcolsep=0.5\tabcolsep
        \begin{tabular}{l  c c c c l l}
            \rowcolor{header}
            \header{Problem} & \header{\hspace{-0.3em}Forced} & \header{Strength\hspace{-0.2em}} & \header{Features} & \header{\hspace{-0.2em}Gravity\hspace{-0.2em}} & \header{Our result} & \header{\hspace{-0.3em}Previous best} \\
            \Pull?-$k$F & no & $k \ge 1$ & fixed blocks & no & \PSPACE-complete \tableref{sec:no gravity} & \NP-hard \cite{Ritt10} \\
            \rowcolor{gray}
            \Pull?-$\ast$F & no & $\infty$ & fixed blocks & no & \PSPACE-complete \tableref{sec:no gravity} & \NP-hard \cite{Ritt10} \\
            \Pull!-$k$F & yes & $k \ge 1$ & fixed blocks & no & \PSPACE-complete \tableref{sec:no gravity} &  \\
            \rowcolor{gray}
            \Pull!-$\ast$F & yes & $\infty$ & fixed blocks & no & \PSPACE-complete \tableref{sec:no gravity} &  \\
            \Pull?-1FG & no & $k = 1$ & fixed blocks & yes & \NP-hard \tableref{sec:Pull1FG NP} &  \\
            \rowcolor{gray}
            \Pull?-1WG & no & $k = 1$ & thin walls & yes & \PSPACE-complete \tableref{sec:optional pull} &  \\
            \Pull?-$k$FG & no & $k \ge 2$ & fixed blocks & yes & \PSPACE-complete \tableref{sec:optional pull} &  \\
            \rowcolor{gray}
            \Pull?-$\ast$FG & no & $\infty$ & fixed blocks & yes & \PSPACE-complete \tableref{sec:optional pull} &  \\
            \Pull!-$k$FG & yes & $k \ge 1$ & fixed blocks & yes & \PSPACE-complete \tableref{sec:mandatory gravity} &  \\
            \rowcolor{gray}
            \Pull!-$\ast$FG & yes & $\infty$ & fixed blocks & yes & \PSPACE-complete \tableref{sec:mandatory gravity} &  \\
        \end{tabular}
    \caption[]{Summary of our results.}
    \label{tab:results}
\end{table}

The only previously known hardness result for this family of problems is NP-hardness for both \PullkF and \PullkF[$*$] \cite{Ritt10}.
In some cases, our results are stronger than the best known results for the corresponding \textsc{Push} (pushing-block) problem; see \cite{PRB16}.
More complex variants \PullPull (where pulled blocks slide maximally), \PushPull (where blocks can be pushed and pulled), and \textsc{Storage Pull} (where the goal is to place multiple blocks into desired locations) are also known to be PSPACE-complete \cite{demaine2017push,PRB16}.

Our reductions are from Asynchronous Nondeterministic Constraint Logic (NCL)
\cite{hearn2009games, DBLP:conf/cccg/Viglietta13} and
planar 1-player motion planning \cite{demaine2018general, doors}.
In Section~\ref{sec:no gravity}, we reduce from NCL to prove \PSPACE-hardness of all nongravity variants.
In Section~\ref{sec:gravity pspace}, we use the motion-planning-through-gadgets framework \cite{demaine2018general} to prove \PSPACE-completeness of most variants with gravity, including all variants with forced pulling and variants with optional pulling and either thin walls or fixed blocks with $k\ge2$.
These reductions use two particular gadgets for 1-player motion planning,
the newly introduced \emph{\nlt}
(a variant of the locking 2-toggle from \cite{demaine2018general})
and the \emph{3-port self-closing door} (one of the self-closing doors from
\cite{doors}).
Although the latter gadget is proved hard in \cite{doors}, for completeness,
we give a more succinct proof in Appendix~\ref{app:self-closing door}.
In Section~\ref{sec:Pull1FG NP}, we prove \NP-hardness for the one remaining case of \PullkFG[1][?],
again reducing from 1-player planar motion planning,
this time with an NP-hard gadget called the crossing NAND gadget \cite{doors}.

\section{Pulling Blocks with Fixed Blocks is \PSPACE-complete}
\label{sec:no gravity}

In this section, we show the PSPACE-completeness of all variants of pulling-block problems we have defined without gravity, namely \PullkF, \PullkW, \PullkF[$k$][!], and \PullkW[$k$][!] for $k \ge 1$, and \PullkF[$\ast$], \PullkW[$\ast$], \PullkF[$*$][!], and \PullkW[$*$][!].  We do this through a reduction from Nondeterministic Constraint Logic \cite{hearn2009games}, which we describe briefly before moving on to the main proof.

\subsection{Asynchronous Nondeterministic Constraint Logic}
\label{ssec:NCL}
\NCL (NCL) takes place on \emph{constraint graphs}: a directed graph where each edge has weight 1 or 2.  Weight-1 edges are called \emph{red}; weight-2 edges are called \emph{blue}.  The ``constraint'' in NCL is that each vertex must maintain in-weight at least 2.  A \emph{move} in NCL is a reversal of the direction of one edge, while maintaining compliance with the constraint.  

In \emph{asynchronous} NCL, the process of switching the orientation of an edge does not happen instantaneously, but instead it takes a positive amount of time, and it is possible to be in the process of switching several edges simultaneously. When an edge is in the process of being reversed, it is not oriented towards either vertex.  Viglietta \cite{DBLP:conf/cccg/Viglietta13} showed that this model is equivalent to the regular (synchronous) model, because there is no benefit to having an edge in the intermediate unoriented state.  In this work, we only use the asynchronous NCL model; any mention of NCL should be understood to mean asynchronous NCL.

An instance of \NCL consists of a constraint graph $G$ and an edge $e$ of $G$, called the \emph{target edge}.  
The output is \YES if there is a sequence of moves on $G$ that reverses the direction of $e$, and \NO otherwise.  \NCL is \PSPACE-complete, even for planar constraint graphs that have only two types of vertices: AND (two red edges, one blue edge) and OR (three blue edges).  
We will reduce from the planar, AND/OR, asynchronous version of NCL to show pulling-block problems without gravity \PSPACE-hard.
For more description of NCL, including a proof of \PSPACE-completeness, the reader is referred to \cite{hearn2009games}.

\subsection{NCL Gadgets in Pulling Blocks}

In order to embed an NCL constraint graph into \PullkF, we need three
components, corresponding to NCL edges (which can attach to AND and OR gadgets
in all necessary orientations, and that allows the player to win if the
winning edge is flipped), AND vertices, and OR vertices.
In each of these gadgets, we will show that if the underlying NCL constraint is violated, then the agent will be ``trapped'', meaning that the state is in an \emph{unrecoverable configuration}, a concept used in several previous blocks games \cite{sokoban,hearn2009games}.  This occurs when the agent makes a pull move after which no set of moves will lead to a solution, generally because the agent has trapped itself in a way that no pull can be made \textit{at all} (or only a few more pull moves may be made, and all of them lead to a state such that there are no more pull moves).

\textbf{Diode Gadget.} Before describing the three main gadgets, we describe a helper gadget, the \emph{diode}, shown in Figure~\ref{fig:diode}. The diode can be repeatedly traversed in one direction but never the other. It was introduced in \cite{Ritt10}.

\begin{figure}
\centering
$\vcenter{\hbox{\includegraphics[scale=0.6]{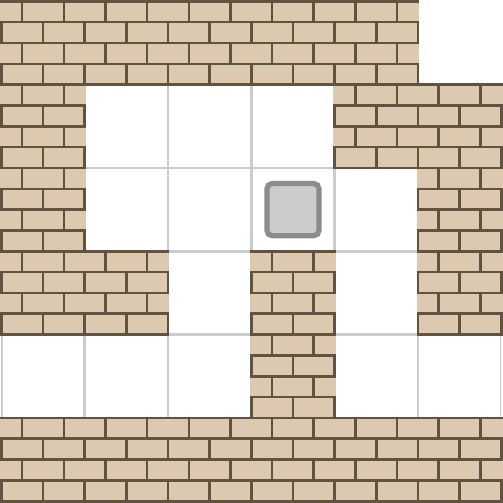}}}
\quad \equiv \quad
\vcenter{\hbox{\includegraphics[angle=90]{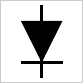}}}$
\caption{Diode gadget, which can be repeatedly traversed from left to right but never from right to left.
In diagrams to follow it will be represented by the diode symbol.}
\label{fig:diode}
\end{figure}

In the next three sections, we describe the three main gadgets in turn.

\subsubsection{Edge Gadget}

\begin{figure}
\centering
\begin{tikzpicture}
	\node at (0,0) {\includegraphics[width=.9\textwidth]{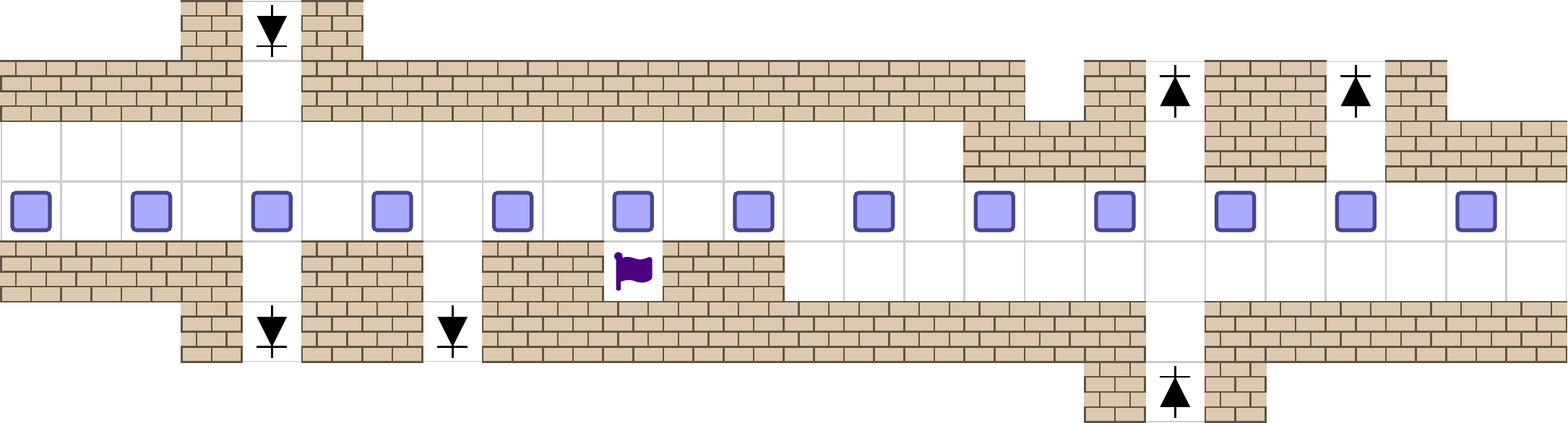}};
	\draw[->, line width=0.8] (-7.2,0) -- (-6.57,0);
	\draw[->, line width=0.8] (-6.57,0) -- (-7.2,0);

	\draw[->, line width=0.8] (7.2,0) -- (6.57,0);
	\draw[->, line width=0.8] (6.57,0) -- (7.2,0);
\end{tikzpicture}
\caption{Edge gadget for building NCL constraint graph in \PullkF.  This edge encodes an NCL wire pointing to the right (opposing the blocks, which are moved to the left).  This figure shows the winning edge, which if flipped allows reaching the goal; nonwinning edges are the same but without the flag.}
\label{fig:wire}
\end{figure}

The edge gadget, shown in Figure~\ref{fig:wire}, encodes a single edge (of either weight) from NCL into \PullkF.   The blocks can shift by exactly one space; whether they are moved left or right (or up or down) corresponds to the NCL edge pointing right or left (or down or up) respectively.  (Note that the orientation of the NCL edge \emph{opposes} the direction of the blocks.)  The ends of the wires will be in vertex gadgets, which are explained below.

The diodes on the sides are to allow the agent to traverse between edges without going through a vertex gadget.  The position and orientation of the diode gadgets prevents the agent from pulling a block out of the edge gadget without trapping itself.

The edge shown in Figure~\ref{fig:wire} only goes in a straight line; it may turn corners via the corner gadget in Figure~\ref{fig:corner}.  We fix small misalignment of wires at the gadgets, not on the wires.

\begin{figure}
\centering
\includegraphics[scale=0.6]{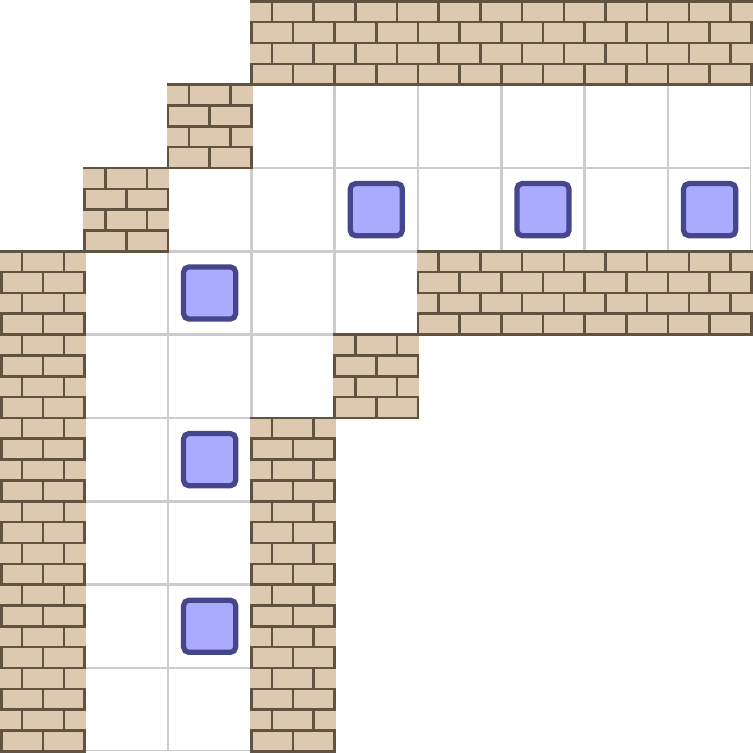}
\caption{Gadget for allowing wires to turn corners.  Currently oriented with the NCL wire pointed left/down.}
\label{fig:corner}
\end{figure}

This wire gadget also allows encoding the win condition in the target edge: the finish tile can be in a small room coming out of the wall, blocked by the blocks' current placement, as shown in Figure~\ref{fig:wire}; it can then be reached only if the edge can be reversed.
(Without the flag, Figure~\ref{fig:wire} shows an ordinary edge gadget.)

Notice that the only location the agent can ever try to cheat and pull a block out of the wire is by pulling a block up into the downward-facing diode at the top-left of the gadget, or (symmetrically) by pulling a block down into the upward-facing diode at the bottom-right of the gadget.  If the agent does this, then the game is now in an unrecoverable configuration---the agent cannot escape back into the wire, because the block it just pulled is blocking the way.  But it also cannot traverse the diode the wrong way to escape in the other direction.  Thus, the agent cannot pull blocks without reaching an unrecoverable configuration, except for the moves which correspond to reversing the NCL edge.

The player may partially reverse an NCL edge and exit before completing the reversal. This leaves a gap of two empty squares between consecutive movable blocks somewhere in the edge gadget. This is why we reduce from asynchronous NCL; while in this partially reversed state, the NCL edge is not oriented towards either vertex, and each vertex gadget behaves as though the edge were oriented away from it.

\subsubsection{OR gadget}
\label{sssec:or}

\begin{figure}
\centering
\begin{tikzpicture}
		\node at (0,0) {\includegraphics[width=9cm]{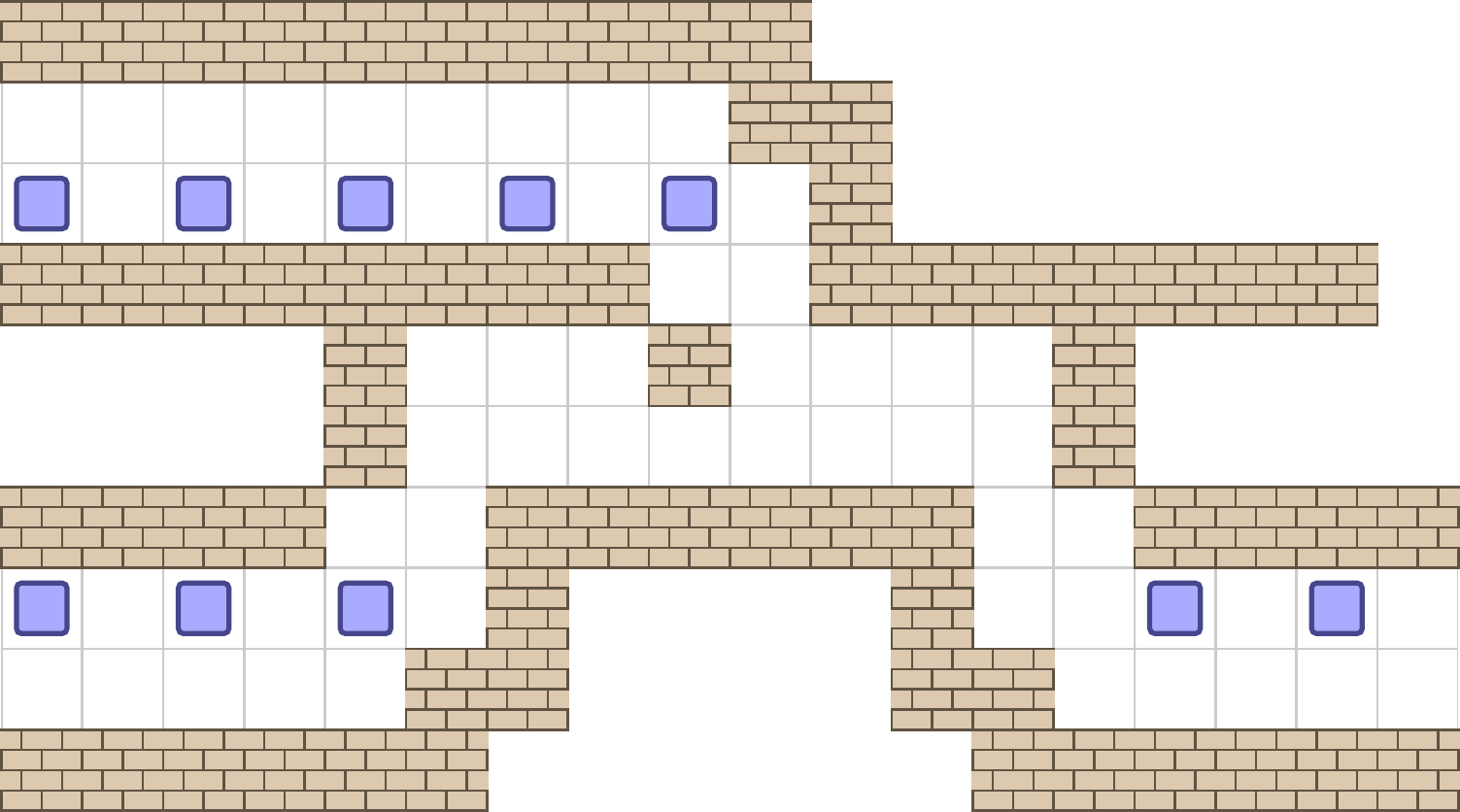}};
        \draw[->, line width=0.8] (-0.25,1.25) -- (-0.75,1.25);
        \draw[->, line width=0.8] (-0.75,1.25) -- (-0.25,1.25);
        \draw[->, line width=0.8] (2.75,-1.25) -- (2.25,-1.25);
        \draw[->, line width=0.8] (2.25,-1.25) -- (2.75,-1.25);
        \draw[->, line width=0.8] (-2.25,-1.25) -- (-2.75,-1.25);
        \draw[->, line width=0.8] (-2.75,-1.25) -- (-2.25,-1.25);
\end{tikzpicture}
\caption{Gadget for an NCL OR vertex.  Currently, the left edge points out, the right edge points in, and the top edge points out.}
\label{fig:or}
\end{figure}

The OR gadget, shown in Figure~\ref{fig:or}, consists of an area fully enclosed by walls except at three connections to edge gadgets.
The agent can enter and exit the enclosed area through an edge connection when the blocks in the edge are pulled away from the OR gadget (i.e., when the NCL edge points in).
When the edge blocks are pulled inward (i.e., NCL edge points out), the agent cannot escape the enclosed area through that edge gadget.
Thus, when inside the enclosed area, the agent may pull an edge block in (i.e., start switching the NCL edge to point out), but if both other NCL edges already point out, the agent will be trapped inside the gadget. This enforces the constraint that at least one edge must point towards the OR vertex.

This gadget can vary in size: if some edge gadgets are slightly misaligned, the middle part can be made bigger or smaller to accommodate---the interior of the gadget needs only to be fully enclosed except at the incident edge gadgets.

\begin{lemma} \label{lem:NCL-vertices-OR}
  The OR gadget enforces exactly the constraints of an NCL OR vertex.
\end{lemma}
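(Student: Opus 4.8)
The plan is to prove the lemma by establishing a two-way correspondence between the non-losing configurations and agent actions of the OR gadget and the legal states and moves of an NCL OR vertex. Recall that an OR vertex has three weight-2 (blue) edges, so its constraint (in-weight $\ge 2$) is equivalent to the condition that at least one incident edge points in. I would first fix the dictionary: for each of the three connections, ``the NCL edge points in'' corresponds to the edge blocks being pulled away from the gadget, so the connection is open and the agent can pass, while ``points out'' corresponds to the blocks pulled inward, so the connection is blocked. As already argued for the edge gadget, a partially reversed edge leaves its block line off its rest position and still blocks the connection, so the gadget treats every not-fully-in edge as out; this is exactly what licenses reducing from \emph{asynchronous} NCL.

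For the soundness direction (the gadget permits no move that NCL forbids), I would use that the enclosed region is bounded by fixed walls except at the three connections, so the only exits are the open (in-pointing) connections, and the only interior block move that changes an edge's state is pulling some incident edge's blocks inward, flipping that edge from in to out. I would then argue this is the unique productive interior action: any attempt to over-pull a line of blocks into the interior, or to pull along an already-blocked connection, either accomplishes nothing new or drags blocks into the room with no way to return them, since pulling is irreversible from the interior (mirroring the diode argument for the edge gadget), and thus only produces an unrecoverable configuration. Consequently, flipping an in-edge to out when it is the \emph{last} in-pointing edge seals all three connections; with no open exit and no interior pull able to re-open one (re-opening requires pulling the shared block line outward from the far side, which an interior agent cannot reach), the agent is trapped. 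I would finish this direction by checking that no further sequence of pulls can ever re-open a connection, so the configuration really is unrecoverable, whence no winning play ever drives the OR vertex below in-weight $2$.

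For the completeness direction (the gadget permits every move NCL allows), I would show each OR-legal operation is realizable. Whenever at least one edge points in, its connection is open, so the agent may enter, exit, and hence traverse between any two in-pointing edges. To perform an NCL-legal flip of an edge \emph{out} of the vertex---legal exactly when some other incident edge still points in---the agent enters (or is already inside), pulls that edge's blocks inward to complete the reversal, and escapes through a remaining open connection. Reversals that point an edge \emph{into} this vertex move the shared block line outward, away from this gadget, and are therefore realized from the neighboring gadget, so the OR gadget need only certify traversal and flip-out; note also that flip-out is physically available only on currently-in edges, which automatically matches the NCL rule that one reverses an in-edge to out. Together these match the OR vertex exactly.

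The main obstacle will be the soundness case analysis, namely ruling out every form of cheating inside the enclosed room. The delicate part is showing that no clever sequence of pulls---possibly using strength $k \ge 2$ or $k = \ast$, and possibly interleaving partial reversals of several edges at once---can open a blocked connection or otherwise evade a would-be trap. I expect to dispatch this exactly as in the edge gadget, leaning on the irreversibility of pulling together with the fact that the interior is fully walled, so that any block dragged off its rest line can never be restored and hence only hastens an unrecoverable configuration.
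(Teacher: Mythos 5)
Your proposal is correct and follows essentially the same approach as the paper: identify block-line position with edge orientation, show that an in-pointing edge leaves an open escape route (so all NCL-legal flips are realizable), and show that the first pull flipping the last in-pointing edge outward seals the fully-walled interior and yields an unrecoverable configuration. The paper's own proof is just a terser version of this; your additional care about partial reversals, over-pulling, and strength $k \ge 2$ is consistent with the surrounding discussion of the edge gadget and does not change the argument.
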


\begin{proof}
The NCL OR constraint is that at least one edge must point into the vertex at all times. If the constraint is satisfied, then at least one wire of blocks is pushed out, and the agent can escape through that edge gadget.  If the agent tries to violate the constraint, then the first move it must take in order to make the last edge point away from the vertex is to pull the last block of the corresponding edge gadget inward.  This puts the gadget into an unrecoverable condition: the agent is now trapped in the OR gadget.
\end{proof}

\subsubsection{AND gadget}
\label{sssec:and}

\begin{figure}
\centering
\begin{tikzpicture}
		\node at (0,0) {\includegraphics[width=8cm]{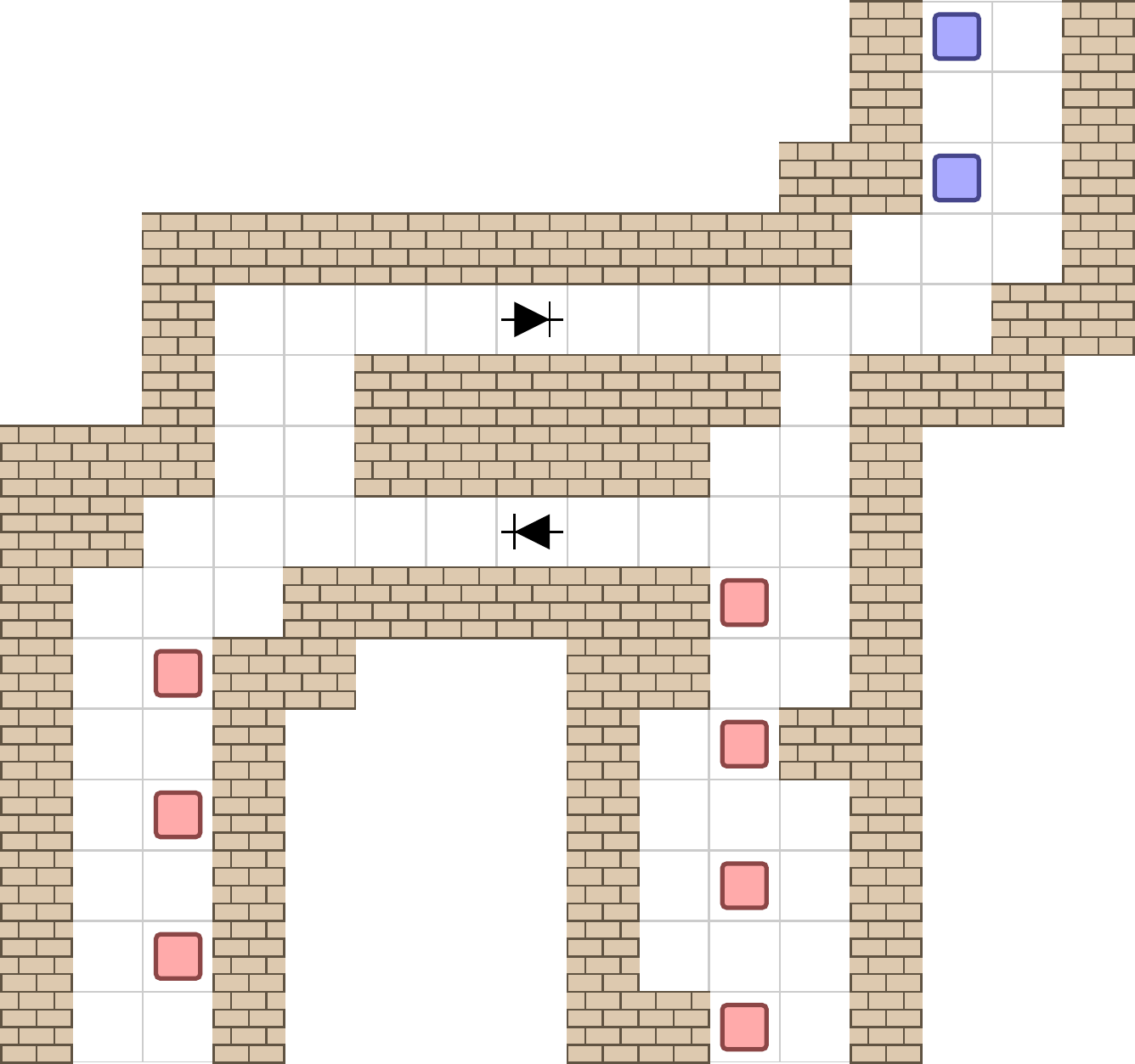}};
        \draw[->, line width=0.8] (2.75,2) -- (2.75,2.5);
        \draw[->, line width=0.8] (2.75,2.5) -- (2.75,2);
        \draw[->, line width=0.8] (1.25,-0.5) -- (1.25,0);
        \draw[->, line width=0.8] (1.25,0) -- (1.25,-0.5);
        \draw[->, line width=0.8] (-2.75,-1) -- (-2.75,-0.5);
        \draw[->, line width=0.8] (-2.75,-0.5) -- (-2.75,-1);

		\draw[->, line width=4, >=stealth, color=red] (-1,3) -- (0,4);
		\draw[->, line width=4, >=stealth, color=red] (1,3) -- (0,4);
		\draw[->, line width=4, >=stealth, color=blue] (0,5) -- (0,4);
\end{tikzpicture}
\caption{Gadget for an NCL AND vertex, currently with all three edges pointing in.  The lower diode allows traversal from right to left, and is blocked if the right red edge is pointing away; the upper diode allows traversal from left to right.  If the (top) blue edge is pointing in, the agent can escape through that edge gadget.  To escape the AND gadget after making the blue edge point away by pulling the bottommost blue block down, both red edges must be pointing in, so that the agent can go through the bottom diode and escape through the left red edge.}
\label{fig:and}
\end{figure}

\begin{figure}
\centering
\begin{tikzpicture}
		\node at (0,0) {\includegraphics[width=8cm]{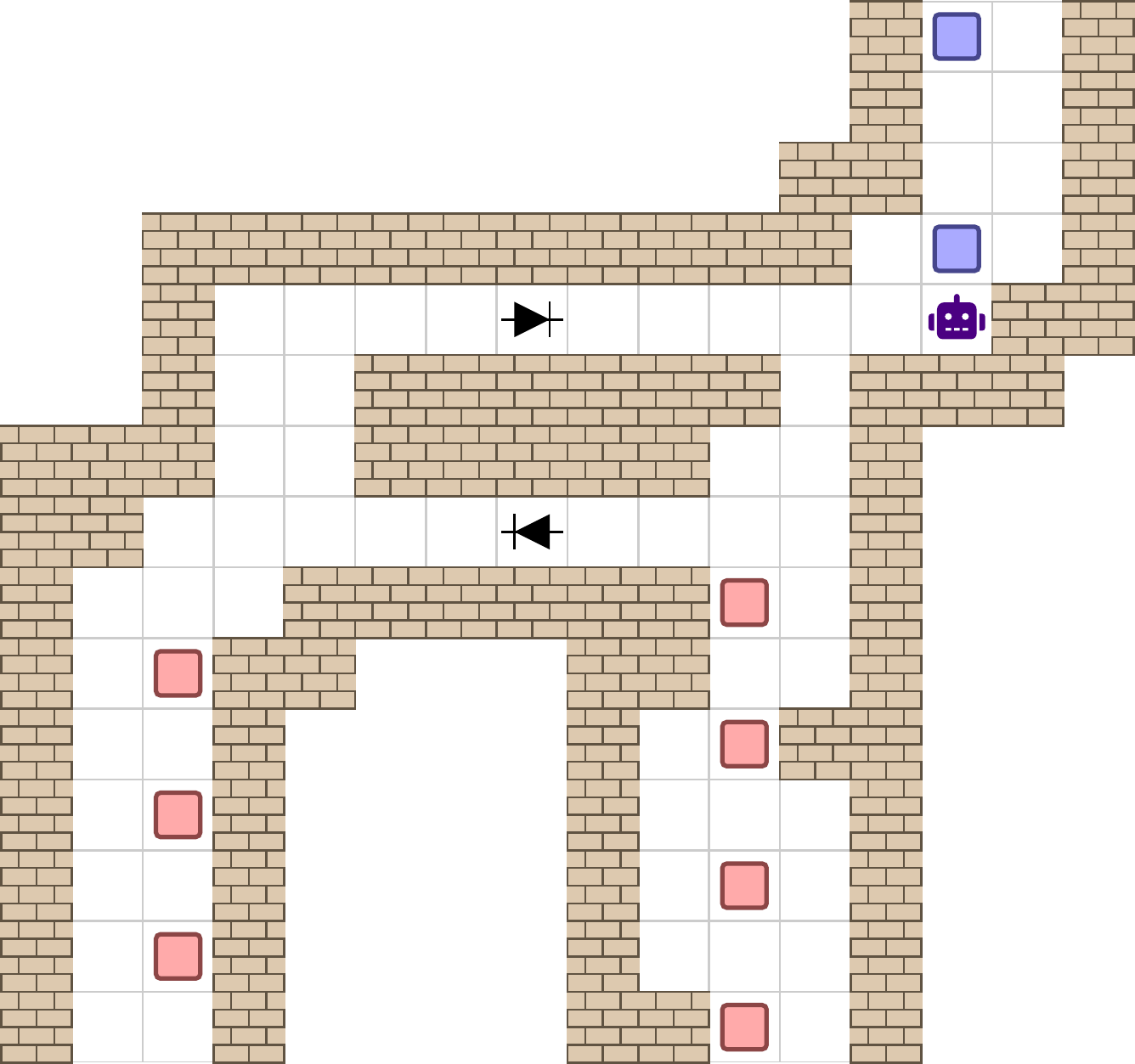}};
        \draw[->, line width=0.8, color=darkgray] (2.75,2) -- (2.75,2.5);
        \draw[->, line width=0.8, color=darkgray] (2.75,2.5) -- (2.75,2);
        \draw[->, line width=0.8, color=darkgray] (1.25,-0.5) -- (1.25,0);
        \draw[->, line width=0.8, color=darkgray] (1.25,0) -- (1.25,-0.5);
        \draw[->, line width=0.8, color=darkgray] (-2.75,-1) -- (-2.75,-0.5);
        \draw[->, line width=0.8, color=darkgray] (-2.75,-0.5) -- (-2.75,-1);

		\draw[->, line width=1.2, color=violet] (2.5,1.5) -- (1.75,1.5) -- (1.75,0) -- (0,0);
		\draw[->, line width=1.2, color=violet] (-.5,0) -- (-2.25,0) -- (-2.25,-.5) -- (-3.25,-.5) -- (-3.25,-3.5);

		\draw[->, line width=4, >=stealth, color=red] (-1,3) -- (0,4);
		\draw[->, line width=4, >=stealth, color=red] (1,3) -- (0,4);
		\draw[<-, line width=4, >=stealth, color=blue] (0,5) -- (0,4);
\end{tikzpicture}
\caption{The agent has just started to flip the blue NCL edge outward by pulling a blue block inward. Both red NCL edges are pointed inward, so the agent can traverse the lower diode and escape out the left red edge. Note that if either red NCL edge were pointed outward, escape would be impossible.} 
\label{fig:and2}
\end{figure}

\begin{figure}
\centering
\begin{tikzpicture}
		\node at (0,0) {\includegraphics[width=8cm]{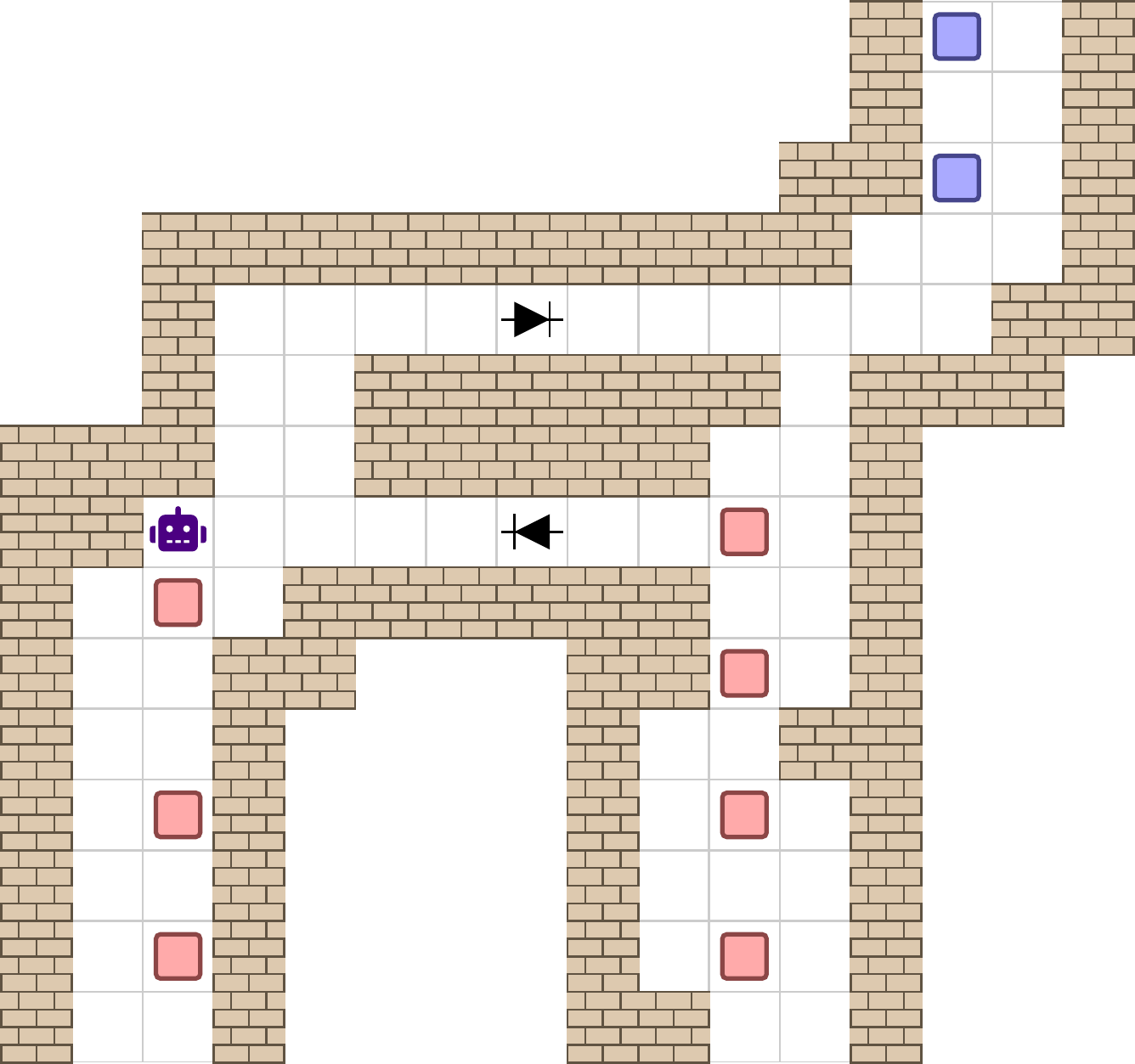}};
        \draw[->, line width=0.8, color=darkgray] (2.75,2) -- (2.75,2.5);
        \draw[->, line width=0.8, color=darkgray] (2.75,2.5) -- (2.75,2);
        \draw[->, line width=0.8, color=darkgray] (1.25,-0.5) -- (1.25,0);
        \draw[->, line width=0.8, color=darkgray] (1.25,0) -- (1.25,-0.5);
        \draw[->, line width=0.8, color=darkgray] (-2.75,-1) -- (-2.75,-0.5);
        \draw[->, line width=0.8, color=darkgray] (-2.75,-0.5) -- (-2.75,-1);

		\draw[->, line width=1.2, color=violet] (-2.5,0) -- (-1.75,0) -- (-1.75,1.5) -- (-.5,1.5);
		\draw[->, line width=1.2, color=violet] (0,1.5) -- (2.25,1.5) -- (2.25,2) -- (3.25,2) -- (3.25,3.5);

		\draw[<-, line width=4, >=stealth, color=red] (-1,3) -- (0,4);
		\draw[<-, line width=4, >=stealth, color=red] (1,3) -- (0,4);
		\draw[->, line width=4, >=stealth, color=blue] (0,5) -- (0,4);
\end{tikzpicture}
\caption{The agent has just started to flip the left red NCL edge outward by pulling the red block inward. Because the blue NCL edge points inward, the agent can traverse the top diode and escape out the blue edge.} 
\label{fig:and3}
\end{figure}
An NCL AND vertex has two red (weight 1) edges and one blue (weight 2) edge. Its constraint is that the blue edge may point outward only if both red edges point inward.  Our AND vertex gadget in \PullkF is shown in Figure~\ref{fig:and}.

Like the OR gadget, the AND gadget traps the agent inside the gadget if the agent tries to violate the NCL constraint. Two of the edge connections, one red and one blue, are like those of the OR gadget, allowing the agent to escape the gadget into the edge if the blocks have been pulled outward (i.e., if the NCL edge points inward).
The remaining red edge connection is different: the agent can never escape into this edge. Instead, when this edge's blocks are pulled outward (i.e., when the NCL edge points inward), it unblocks a path allowing the agent to traverse from the blue-exit side of the gadget to the red-exit side of the gadget.

An agent inside the gadget trying to pull the blue edge block inward (i.e., start switching the blue NCL edge to point outward) is trapped on the blue-exit side of the gadget unless this special red edge has its blocks pulled outward (i.e., the red NCL edge points inward); even then, the agent is still trapped inside the gadget unless the other red edge also has its blocks pulled outward to allow escape (i.e., the other red NCL edge also points inward). Thus an agent trying to switch the blue NCL edge outward is trapped unless both red NCL edges point inward, enforcing the AND condition. This is illustrated in Figure~\ref{fig:and2}

The AND gadget contains two reusable one-way gadgets. The lower diode is blocked if the right red edge points away, trapping the agent if the blue edge also points away, but allowing the agent to traverse from right to left and escape if both red edges point in.  The upper diode allows the agent to travel from the red-exit side to the blue-exit side regardless of the state of the third edge; this is necessary to ensure the agent can escape out the blue exit (if the blue edge points in) after flipping either red edge to point away, as illustrated in Figure~\ref{fig:and3}.

As with the OR gadget, if the incident edge gadgets are aligned with different parity, this gadget can be expanded slightly accommodate the edge gadgets.

\begin{lemma} \label{lem:NCL-vertices-AND}
  The AND gadget enforces exactly the constraints of an NCL AND vertex.
\end{lemma}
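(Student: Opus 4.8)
The plan is to mirror the proof of Lemma~\ref{lem:NCL-vertices-OR}: I will show that every move respecting the AND constraint can be carried out with the agent escaping the gadget, and that every move violating the constraint forces the agent into an unrecoverable configuration. The key observation is that the only moves available at a vertex are the ``pull-inward'' moves, which flip an incident edge from pointing in to pointing out (the reverse flips are performed from the opposite endpoint). The AND constraint says the blue edge may point out only if both red edges point in; since each red edge has weight $1$ and the blue edge weight $2$, this unpacks into two local rules: flipping the blue edge outward is legal only when both reds point in, and flipping either red edge outward is legal only when the blue edge points in.

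I would organize the argument by the three pull-inward moves and read off the escape condition for each. First, when the agent flips the blue edge outward by pulling a blue block inward (Figure~\ref{fig:and2}), the blue exit is blocked, so to avoid being trapped the agent must reach the red-exit side and leave through the left red edge. The only route is the lower diode, passable (right to left) exactly when the right red edge points in, followed by the left red exit, open exactly when the left red edge points in; hence the agent escapes if and only if both reds point in, matching the constraint. Second, when the agent flips a red edge outward by pulling its block inward (Figure~\ref{fig:and3}), that red connection is blocked and the agent's only escape is via the upper diode to the blue-exit side and out the blue edge, which is open exactly when the blue edge points in; hence escape occurs if and only if blue points in, regardless of the other red, again matching the constraint. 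I would handle the special non-escapeable red edge by the same argument, noting that although the agent can never exit through it, it can still pull its block inward, after which (the internal path now being blocked) the same blue-in escape condition applies.

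To finish the correspondence I would verify the converse direction---that whenever the constraint is satisfied the agent can enter, perform the flip, and leave---by reading off the identical diode and exit conditions, and I would confirm that the two diodes are genuinely one-way, so the agent cannot traverse a diode backwards to manufacture an extra escape route, and that, exactly as in the edge gadget, the placement of walls and diodes prevents the agent from extracting any edge block from the gadget without immediately trapping itself. I would also note that a partially reversed (asynchronous) edge leaves no gap at this vertex and therefore behaves like an outward-pointing edge for all of these escape conditions, which is what makes the reduction from asynchronous NCL sound.

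The step I expect to be the main obstacle is the geometric case analysis underlying the two ``if and only if'' claims above: precisely, that the lower diode's passability is controlled solely by the right red edge's state, that the upper diode is always passable in its forward direction but never backward, and that the non-escapeable red edge admits no exit while still permitting its block to be pulled. Establishing these facts rigorously requires a careful walk through the gadget's interior in each reachable configuration and an exhaustive check that no unintended escape or block-extraction is possible; the two italicized constraint rules then follow mechanically.
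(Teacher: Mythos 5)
Your proposal is correct and follows essentially the same approach as the paper: a case analysis over the three pull-inward moves, showing that the escape route after flipping the blue edge runs through the lower diode and the left red exit (hence requires both reds in), and that the escape route after flipping either red edge runs to the blue exit (hence requires blue in), with violations leaving the agent trapped in an unrecoverable configuration. The paper organizes the cases as ``constraint satisfied'' versus ``constraint violated'' rather than by move, but the underlying diode and exit conditions you identify are exactly the ones it uses.
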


\begin{proof}
The NCL AND constraint is that either the blue edge (top) or both red edge (bottom left and right) point towards the vertex.
If both red edges are pointing in, then the agent can pull the bottom blue block in and then escape through the left red edge gadget by going through the bottom diode.
If the blue edge is pointing in, then the agent can always escape through the blue edge gadget. The agent can orient the left red edge out by entering through the red-exit, going through the top diode, and leaving through the blue exit; and it can orient the right red edge out by entering and leaving through the blue exit.

The agent can attempt to violate the constraint by making both the blue edge and at least one red edge point away from the AND vertex. We consider the two red edges separately.
First, if both the left red edge and the blue edge point out, then both exit points from the gadget are blocked, so the agent is trapped.
Second, suppose that the constraint becomes violated by both the blue edge and the right red edge pointing away. Then the agent has just pulled either the bottom blue block on the top red block into the AND gadget, and the agent is in the right side of the gadget. The agent is trapped: the blue exit is blocked by the blue edge pointing away, the bottom diode is blocked by the red edge pointing away, and the top diode cannot be traversed from right to left.
\end{proof}

\subsection{Proof of \PSPACE-completeness}

We first observe that every pulling-block problem we consider is in \PSPACE.

\begin{lemma}
\label{lem:pull-in-PSPACE}
Every pulling-block problem defined in Section~\ref{sec:intro} is in \PSPACE.
\end{lemma}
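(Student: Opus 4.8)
The plan is to give a standard configuration-space reachability argument, relying on Savitch's theorem ($\NPSPACE = \PSPACE$) so that it suffices to exhibit a nondeterministic polynomial-space decision procedure. First I would observe that a complete configuration of any of these puzzles is captured by (i) the cell occupied by the agent and (ii) the set of cells occupied by movable blocks; the fixed blocks, thin walls, goal square, and board boundary are part of the input and never change. On a board with $n$ cells this is describable in $O(n \log n)$ bits, hence polynomial space, and the total number of distinct configurations is at most $2^{O(n)}$.

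Next I would argue that, given a current configuration, the legality of a candidate move and the resulting successor configuration can both be computed in polynomial time, and therefore polynomial space, uniformly across all variants. A move is specified by a direction together with (in the optional-pull variants) the number $j \le k$ of blocks to drag; checking that the agent's target cell is free and that the line of $j$ cells behind the agent is an unbroken run of movable blocks is a constant number of cell lookups per block. Forced pulling (\textsc{Pull!}) removes the choice of $j$, since the agent must drag the maximal admissible run, which is again read off in polynomial time; unbounded strength ($\ast$) is handled identically with $j$ ranging up to the board size. Finally, gravity is a deterministic post-move settling step: after the agent moves, repeatedly let each unsupported movable block fall one cell until nothing moves. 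This fixed point is reached after at most $n$ sweeps and is computable in polynomial time.

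With these two facts in hand I would describe the search itself. Guess the winning sequence of moves one move at a time, storing only the current configuration together with a step counter. After each guessed move, verify its legality as above, apply it (including the gravity settling step if present), and accept as soon as the agent occupies the goal cell. Because any reachable goal configuration is reached by some path visiting no configuration twice, it suffices to bound the length of the guessed sequence by the number of configurations $2^{O(n)}$; the counter therefore needs only $O(n)$ bits, and the procedure rejects the current branch if this bound is exceeded. The entire computation uses polynomial space, so every pulling-block problem lies in $\NPSPACE = \PSPACE$.

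The argument is routine, and I do not expect a genuine obstacle; the only points requiring care are confirming that the variant-specific dynamics---maximal forced pulls and, above all, the iterated gravity settling---are genuinely deterministic and polynomial-time, so that each transition in the configuration graph is well defined and efficiently checkable. Once this is verified for each of the four feature axes (optional/forced, strength $k$ or $\ast$, fixed blocks or thin walls, gravity on or off) and their combinations, the Savitch-based reachability bound applies verbatim.
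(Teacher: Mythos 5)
Your proposal is correct and follows essentially the same route as the paper: store the configuration in polynomial space, nondeterministically guess moves while updating the configuration, and invoke Savitch's theorem to conclude membership in \PSPACE. The extra details you supply (polynomial-time legality checks, gravity settling, and the step-counter bound) are sound elaborations of the same argument.
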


\begin{proof}
The entire configuration while playing on instance of a pulling-block problem can be stored in polynomial space (e.g., as a matrix recording whether each cell is empty, a fixed block, a movable block, the agent's location, or the finish tile). There is a simple nondeterministic algorithm which guesses each move and keeps track of the configuration using only polynomial space, accepting if the agent reaches the goal square.
Thus the problem is in \NPSPACE, so by Savitch's Theorem \cite{savitch1970relationships} it is also in \PSPACE.
\end{proof}

\begin{theorem}
\label{thm:pull-kF-PSPACE-complete}
\PullkF and \PullkF[$k$][!] PSPACE-complete for $k \ge 1$ and $k=*$.
\end{theorem}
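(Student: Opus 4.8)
The plan is to combine the \PSPACE-membership already established in Lemma~\ref{lem:pull-in-PSPACE} with a reduction from planar, AND/OR, asynchronous NCL, which is \PSPACE-complete (Section~\ref{ssec:NCL}). Given an NCL instance consisting of a constraint graph $G$ and a target edge $e$, I would first embed $G$ in the plane and then replace each of its parts with the gadgets already constructed: each NCL edge becomes an edge gadget (Figure~\ref{fig:wire}), routed and turned using the corner gadget (Figure~\ref{fig:corner}) and the diode (Figure~\ref{fig:diode}); each AND vertex becomes an AND gadget and each OR vertex becomes an OR gadget. The flag is placed in the small room of the edge gadget corresponding to $e$, so that it is reachable exactly when $e$ can be reversed. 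Small misalignments where wires meet vertices are absorbed by slightly resizing the vertex gadgets, as noted for the OR and AND gadgets. The whole construction has size polynomial in $G$ and is computable in polynomial time.

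The correspondence is that the position of the line of blocks in each edge gadget encodes the orientation of the corresponding NCL edge, with blocks pulled toward a vertex meaning the edge points away from it, and vice versa; shifting the line by one space (which the agent accomplishes by entering the relevant vertex gadget and pulling) simulates reversing that edge, and a partially shifted line with a gap of two empty squares corresponds exactly to the intermediate unoriented state of asynchronous NCL. For the forward direction, any winning NCL sequence reversing $e$ translates move by move into agent motions: to flip an edge the agent enters the appropriate vertex gadget, pulls that edge's blocks, and exits, where Lemmas~\ref{lem:NCL-vertices-OR} and~\ref{lem:NCL-vertices-AND} guarantee the agent can enter and exit exactly when the NCL constraints permit; once $e$ is reversed the agent walks into the now-unblocked room and reaches the flag. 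For the backward direction I would argue the contrapositive: any agent move not corresponding to a legal NCL move either pulls a block out of a wire (trapping the agent against a diode, as argued for the edge gadget) or violates a vertex constraint (trapping the agent inside an AND or OR gadget, by Lemmas~\ref{lem:NCL-vertices-OR} and~\ref{lem:NCL-vertices-AND}). Since a trapped agent is in an unrecoverable configuration and the flag is reachable only through the target edge gadget, reaching the flag forces a legal reversal of $e$.

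The remaining work, which I expect to be the main obstacle, is verifying that this single construction simultaneously establishes all four claimed cases: optional (\PullkF) versus forced (\PullkF[$k$][!]) pulling, and each finite strength $k \ge 1$ as well as unbounded strength $k = *$. The gadget analyses in Lemmas~\ref{lem:NCL-vertices-OR} and~\ref{lem:NCL-vertices-AND} were phrased for a single agent making optional pulls, so for higher or unbounded strength I would check that the only new capability is shifting a longer contiguous line of wire blocks by one space in a single move, which still faithfully tracks the edge orientation; moreover, because every wire is one block thick, any attempt to pull transverse to a wire pulls a single block into a diode and traps the agent exactly as in the strength-$1$ analysis. For forced pulling, where the agent has strictly fewer choices, I would confirm both that all block-shifts needed to simulate a winning NCL sequence remain available (the side diodes let the agent traverse between gadgets without being forced to disturb a full wire) and that every forced pull either performs a legitimate simulated move or immediately produces the same unrecoverable configuration used in the optional analysis, so the agent is never forced into a move the optional argument excluded only by the agent's choice. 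Confirming these robustness properties for the edge, corner, diode, OR, and AND gadgets completes the proof that all of \PullkF and \PullkF[$k$][!] for $k \ge 1$ and $k = *$ are \PSPACE-complete.
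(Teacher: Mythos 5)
Your proposal matches the paper's proof essentially step for step: \PSPACE membership via Lemma~\ref{lem:pull-in-PSPACE}, then a reduction from planar AND/OR asynchronous NCL that embeds the constraint graph in a scaled grid and replaces its parts with the edge, corner, diode, OR, and AND gadgets, places the flag in the target edge gadget, and invokes Lemmas~\ref{lem:NCL-vertices-OR} and~\ref{lem:NCL-vertices-AND} for correctness. The only remark worth making is that the ``remaining work'' you flag is resolved even more simply than you anticipate: the wire blocks are spaced with empty squares between them, so no two movable blocks are ever adjacent, hence the agent never has the opportunity to pull more than one block (disposing of all $k \ge 1$ and $k = *$ at once) and never has to decline a pull (disposing of the forced-pull variants).
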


\begin{proof}
Lemma~\ref{lem:pull-in-PSPACE} gives us containment in \PSPACE.
For \PSPACE-hardness, we reduce from asynchronous NCL
(as defined in Section~\ref{ssec:NCL}).

Given a planar AND/OR NCL graph, we construct an instance of \PullkF or \PullkF[$k$][!] as follows. First, embed the graph in a grid graph. Scale this grid graph by enough to fit our gadgets; $20\times20$ suffices. At each vertex, place the appropriate AND or OR vertex gadget. Place edge gadgets in the appropriate configuration along each edge, using corner gadgets on turns. Adjust the vertex gadgets to accommodate the alignment of the edge gadgets incident to them. Finally, place the goal tile in the edge gadget corresponding to the target edge so that it is accessible only if the target edge is flipped, and place the agent on any empty tile. 

The agent can walk through edge gadgets to visit any NCL edge or vertex, and by
Lemmas~\ref{lem:NCL-vertices-OR} and~\ref{lem:NCL-vertices-AND}, flip edges in accordance with the rules of NCL. Ultimately, it can reach the goal tile if and only if the target edge of the NCL instance can be reversed.

In our construction, the agent never has the opportunity to pull more than 1 block at a time.  Thus the reduction works for \PullkF for any $k\geq1$, including $k=*$. In addition, the agent never has to choose not to pull a block when taking a step, so the reduction works for \PullkF[$k$][!] as well as \PullkF.
\end{proof}

\begin{corollary}
\label{cor:pull-W-PSPACE-complete}
\PullkW and \PullkW[$k$][!] are \PSPACE-complete for $k\geq1$ and $k=*$.
\end{corollary}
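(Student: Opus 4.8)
The plan is to derive this corollary almost immediately from Theorem~\ref{thm:pull-kF-PSPACE-complete}, using the fact (noted in Section~\ref{sec:intro}) that thin walls are strictly more general than fixed blocks. For containment, \PullkW and \PullkW[$k$][!] are among the pulling-block problems defined in Section~\ref{sec:intro}, so Lemma~\ref{lem:pull-in-PSPACE} immediately places them in \PSPACE.

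For \PSPACE-hardness, I would give a trivial polynomial-time reduction from the corresponding fixed-block variants, which Theorem~\ref{thm:pull-kF-PSPACE-complete} shows are \PSPACE-hard. Given an instance of \PullkF (respectively \PullkF[$k$][!]), I would build an instance of \PullkW (respectively \PullkW[$k$][!]) by keeping the agent, goal, and all movable blocks in place and replacing each fixed $1\times1$ block by an empty cell whose four edges are each a thin wall.

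The key step is to check that this substitution is behaviorally identical to a fixed block under the pulling dynamics. Because a thin wall can be neither traversed nor moved, the agent can never enter the walled-off cell, no movable block can ever be pulled or forced into or out of it, and its contents can never change; hence the set of legal agent moves---and therefore the reachability of the goal---is exactly the same in the two instances. The same argument applies verbatim for every strength $k \ge 1$ and for $k = \ast$, and for both the optional and forced pulling semantics, so one reduction establishes \PSPACE-hardness of all four variants at once. The only point needing any care---and it is not really an obstacle---is confirming that the forced-pull rule never interacts with an enclosed cell, which is immediate since the walls block every direction adjacent to it. Combining containment with hardness yields the claimed \PSPACE-completeness.
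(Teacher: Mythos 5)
Your proposal is correct and matches the paper's argument exactly: the paper likewise proves hardness by reducing from \PullkF (and \PullkF[$k$][!]) to the thin-wall variants by replacing each fixed block with a cell enclosed by four thin walls, with containment already supplied by Lemma~\ref{lem:pull-in-PSPACE}. Your additional verification that the substitution is behaviorally identical under all strengths and both pulling semantics is a welcome elaboration of the same idea.
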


\begin{proof}
A fixed block can be simulated using four thin walls drawn around a single tile, so our constructions can be built using thin walls instead of fixed blocks. Formally, this is a reduction from \PullkF to \PullkW and a reduction from \PullkF[$k$][!] to \PullkW[$k$][!].
\end{proof}

\section{\texorpdfstring{\PullkFG}{Pull?-kFG} is \PSPACE-complete for \texorpdfstring{$k \ge 2$}{k ≥ 2} and \texorpdfstring{\PullkFG[$k$][!]}{Pull!-kFG} is \PSPACE-complete for \texorpdfstring{$k \ge 1$}{k ≥ 1}}
\label{sec:gravity pspace}

In this section, we show \PSPACE-completeness results for most of the
pulling-block variants with gravity.
In Section~\ref{sec:gadgets}, we introduce and prove results about
\emph{1-player motion planning} from the motion-planning-through-gadgets
framework introduced in \cite{demaine2018computational}, which will be the
basis for the later proofs.
In Section~\ref{sec:optional pull},
we show \PSPACE-completeness for \PullkFG with $k \ge 2$, for
\PullkFG[$\ast$], for \PullkWG with $k \ge 1$, and for \PullkWG[$\ast$].
In Section~\ref{sec:mandatory gravity}, we show \PSPACE-completeness for
\PullkFG[$k$][!] with $k \ge 1$, and for \PullkFG[$\ast$][!].
The one case missing from this collection is \PullkFG[1], which we prove
NP-hard later in Section~\ref{sec:Pull1FG NP}.

\subsection{1-player Motion Planning}
\label{sec:gadgets}

\emph{1-player motion planning} refers to the general problem of planning an agent's motion to complete a path through a series of gadgets whose state and traversability can change when the agent interacts with them. In particular, a \emph{gadget} is a constant-size set of locations, states, and traversals, where each traversal indicates that the agent can move from one location to another while changing the state of the gadget from one state to another. A system of gadgets is constructed by connecting the locations of several gadgets with a graph, which is sometimes restricted to be planar. The decision problem for 1-player motion planning is whether the agent, starting from a specified stating location, can follow edges in the graph and transitions within gadgets to reach some goal location.

Our results use that 1-player planar motion planning is \PSPACE-complete
for the following gadgets:
\begin{enumerate}
\item The \emph{locking 2-toggle}, shown in Figure~\ref{fig:l2t}, is a three-state two-tunnel reversible deterministic gadget. In the \emph{middle state}, both tunnels can be traversed in one direction, switching to one of two \emph{leaf states}. Each leaf state only allows the transition back across that tunnel in the opposite direction, returning the gadget to the middle state. Traversing one tunnel ``locks'' the other side from being used until the prior traversal is reversed.

1-player planar motion planning with locking 2-toggles was shown
\PSPACE-complete in \cite{demaine2018general}.
In Section~\ref{sec:leaf2toggle}, we strengthen the result in \cite{demaine2018general} by showing that 1-player motion planning with locking 2-toggle remains hard even if the initial configuration of the system has all gadgets in leaf (locked) states.

 \begin{figure}[ht]
 	\begin{minipage}[b]{0.45\linewidth}
 		\centering
	\includegraphics[width=0.45\textwidth]{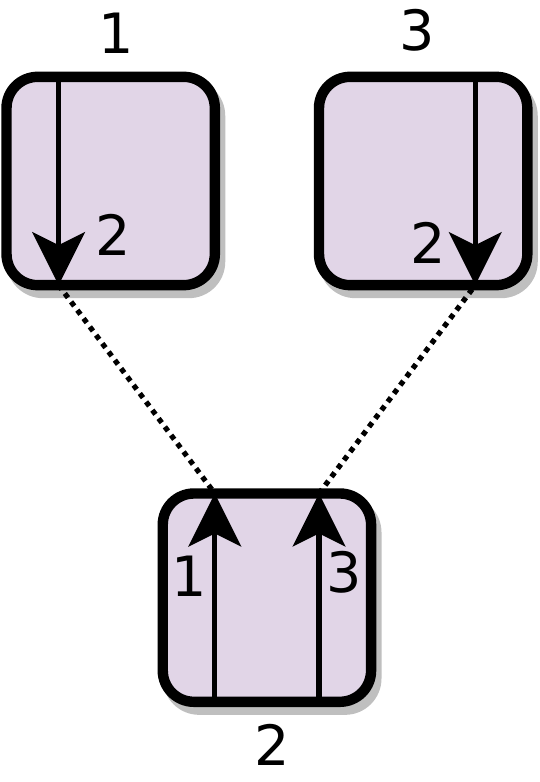}
	\caption{State space of the locking 2-toggle.}
	\label{fig:l2t}
 	\end{minipage}
 	\hspace{0.5cm}
 	\begin{minipage}[b]{0.5\linewidth}
 		\centering
	\includegraphics[width=0.4\textwidth]{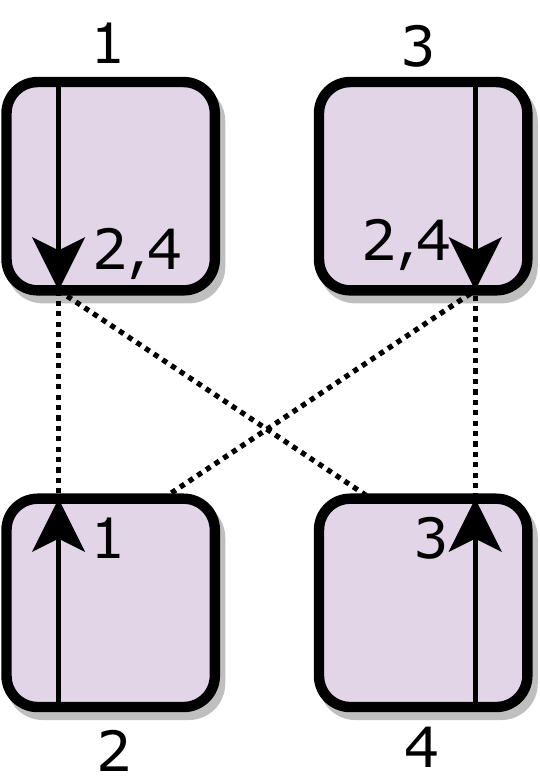}
	\caption{State space of the nondeterministic locking 2-toggle.}
	\label{fig:nl2t}
 	\end{minipage}
 \end{figure}
 
\item The \emph{\nlt}, shown in Figure~\ref{fig:nl2t}, is a four-state gadget where each state has two transitions, each across the same tunnel. The top pair of states each allow a single traversal downward, and allow the agent to choose either of the two bottom states for the gadget. Similarly, the bottom pair of states each allow a single traversal upward to one of the top states. We can imagine this as being similar to the locking 2-toggle if the tunnel to be taken next is guessed ahead of time: the bottom state of the locking 2-toggle is split into two states which together allow the same traversals, but only if the agent picks the correct one ahead of time.

In Section~\ref{sec:nondet2toggle}, we show that 1-player motion planning with the \nlt is \PSPACE-complete.

\item The \emph{door gadget} has three directed tunnels called \emph{open}, \emph{close}, and \emph{traverse}. The traverse tunnel is open or closed depending on the state of the gadget and does not change the state. Traversing the open or close tunnel opens or closes the traverse tunnel, respectively.

1-player motion planning with door gadgets was shown
\PSPACE-complete in \cite{nintendoor} and explored more thoroughly
(in particular, proved hard for most planar cases) in \cite{doors}.

\item The \emph{3-port self-closing door}, shown in Figure~\ref{fig:statespace-scd}, is a gadget with a tunnel that becomes closed when the agent traverses it and a location that the agent can visit to reopen the tunnel. 
It has an \emph{opening port}, which opens the gadget,
and a \emph{self-closing tunnel}, which is the tunnel that closes when traversed.

In Appendix~\ref{app:self-closing door}, we prove that 1-player planar motion planning with the \emph{3-port self-closing door} is \PSPACE-complete.
A more general result on self-closing doors can be found in \cite{doors}, but we include this more succinct proof for completeness and conciseness. 

\begin{figure}
\centering
\includegraphics[width=0.3\textwidth]{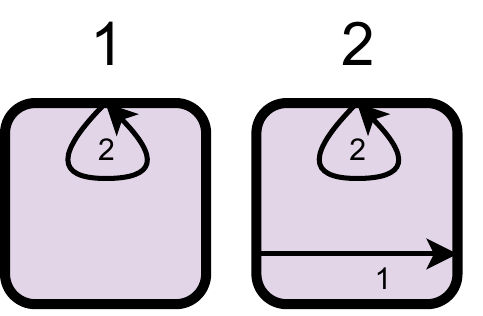}
\caption{State space of the 3-port self-closing door, used in the \PullkFG[$k$][!] reduction.}
\label{fig:statespace-scd}
\end{figure} 
 
\end{enumerate}

\subsubsection{Nondeterministic Locking 2-toggle}
\label{sec:nondet2toggle}
\label{sec:leaf2toggle}

In this section, we prove that 1-player motion planning with the \nlt is \PSPACE-complete. We also show that 1-player motion planning with the locking 2-toggle remains \PSPACE when the gadgets are restricted to start in leaf states.

We use the construction shown in Figure~\ref{fig:nl2t-to-l2t} to show simultaneously that locking 2-toggles starting in leaf states can simulate a locking 2-toggle starting in a nonleaf state, and nondeterministic locking 2-toggles can simulate a locking 2-toggle. This construction consists of two nondeterministic locking 2-toggles and a 1-toggle. A \emph{1-toggle} is a two-state, two-location, reversible, deterministic gadget where each state admits a single (opposite) transition between the locations and these transitions flip the state. It can be trivially simulated by taking a single tunnel of a locking 2-toggle or nondeterministic locking 2-toggle.

\begin{theorem} \label{thm:nlt}
	1-player planar motion planning with the \nlt is \PSPACE-complete.
\end{theorem}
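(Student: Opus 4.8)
The plan is to prove \PSPACE-completeness in the two usual directions, with hardness resting on the already-established hardness of the (deterministic) locking $2$-toggle from \cite{demaine2018general}.

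\textbf{Membership.} First I would observe that $1$-player motion planning with any fixed constant-size gadget set lies in \PSPACE. A configuration is completely described by the agent's current location together with the state of every gadget, which takes polynomial space; a nondeterministic machine can guess one traversal at a time, update this configuration, check legality, and accept when the goal location is reached, all in polynomial space. This puts the problem in \NPSPACE, so Savitch's theorem \cite{savitch1970relationships} gives containment in \PSPACE, exactly as in Lemma~\ref{lem:pull-in-PSPACE}.

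\textbf{Hardness by gadget simulation.} For \PSPACE-hardness I would reduce from $1$-player planar motion planning with the locking $2$-toggle, known \PSPACE-complete \cite{demaine2018general}. It suffices to build a constant-size planar gadget, using only \nlt gadgets, that simulates a single locking $2$-toggle; a $1$-toggle comes for free as one tunnel of an \nlt gadget. Substituting this simulating gadget for every locking $2$-toggle of a hard instance while keeping the connection graph unchanged yields an equivalent instance that uses only \nlt gadgets, and planarity is preserved because the replacement is a fixed planar gadget. The simulation I would use is the one in Figure~\ref{fig:nl2t-to-l2t}, comprising two \nlt gadgets $A$ and $B$ and a $1$-toggle $T$.

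\textbf{The crux: neutralizing the nondeterminism.} The heart of the argument is verifying that this composite behaves \emph{exactly} as a locking $2$-toggle, so that the nondeterministic choices inside $A$ and $B$ grant the agent no traversal forbidden to an honest locking $2$-toggle. I would identify the two tunnels of the simulated gadget with $A$ and $B$ and track the product state $(A,B,T)$, exhibiting a correspondence in which the three states of the locking $2$-toggle are encoded by the states of $A$ and $B$ together with $T$ recording which tunnel was most recently entered. The key claim, to be checked by case analysis over this product state space, is that whenever the agent enters one of the two \nlt gadgets and makes the ``wrong'' nondeterministic guess, its only legal continuation is to reverse immediately back to the configuration it came from; consequently the agent can never leave both tunnels in a partially traversed (``unlocked'') state at once, which is precisely the locking behavior. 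I expect this case analysis, ruling out every way the agent might exploit the nondeterminism to cheat, to be the main obstacle.

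\textbf{Initial states and the leaf-state corollary.} Finally I would choose the starting states of $A$, $B$, and $T$ so that the composite begins in whatever state (middle or a leaf) the locking $2$-toggle in the source instance requires. Because the very same construction functions when its inner gadgets start in leaf-type states, this simultaneously yields the companion claim stated in the text: $1$-player motion planning with the locking $2$-toggle remains \PSPACE-hard even when all gadgets begin in leaf states. Combined with membership, this establishes \PSPACE-completeness for the \nlt gadget.
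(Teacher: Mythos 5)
Your proposal matches the paper's proof: both reduce from 1-player planar motion planning with the locking 2-toggle and simulate that gadget by the two-\nlt-plus-1-toggle construction of Figure~\ref{fig:nl2t-to-l2t}, with the essential verification being that a wrong nondeterministic guess only lets the agent backtrack, so the composite's reachable traversals are exactly those of a locking 2-toggle (the paper carries out this check as a short path-following argument rather than an explicit product-state case analysis, but it is the same argument). Your added remarks on \PSPACE membership and on the leaf-state corollary likewise mirror what the paper states separately.
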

\begin{proof}
	
	In the construction shown in Figure~\ref{fig:nl2t-to-l2t}, the agent can enter through either of the top lines; suppose they enter on the left. Other than backtracking, the agent's only path is across the bottom 1-toggle, then up the leftmost tunnel, having chosen the state of the \nlt which makes that tunnel traversable. 

	Now the only place the agent can usefully enter the construction is the leftmost line. The agent can only go down the leftmost tunnel, up the 1-toggle, and out the top right entrance, again making the appropriate nondeterministic choice when traversing the left gadget. 

	Symmetrically, if (from the unlocked state) the agent enters the top right, they must exit the bottom right, and the next traversal must go from the bottom right to the top right and return the construction to the unlocked state. Thus this construction simulates a locking 2-toggle.
\end{proof}

\begin{figure}
	\centering
	\includegraphics[scale=.8]{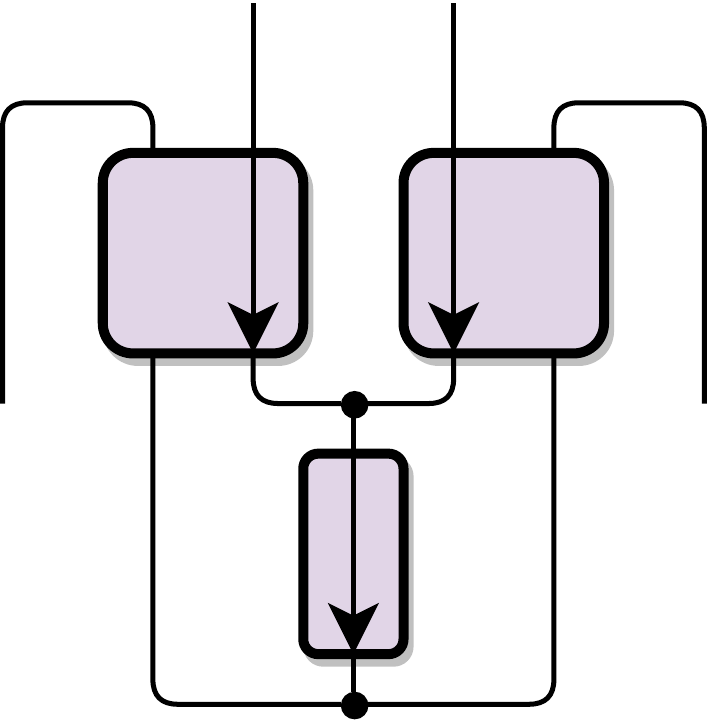}
	\caption{Constructing a locking 2-toggle from a nondeterministic locking 2-toggle. It is currently in the unlocked state. The nondeterministic locking 2-toggles are in leaf states (top states in Figure~\ref{fig:nl2t}).}
	\label{fig:nl2t-to-l2t}
\end{figure}

If we instead build the above construction with locking 2-toggles in leaf states, then all three of the locking 2-toggles used are in leaf states (the 1-toggle is one tunnel of a locking 2-toggle). A very similar argument as the \nlt construction shows this gadget also simulates a locking 2-toggle. Thus, given a 1-player motion planning problem with locking 2-toggles, we can replace all of the locking 2-toggles in nonleaf states with this gadget to obtain an instance where all starting gadgets are in leaf states.

\begin{corollary}
	1-player motion planning with the locking 2-toggle where all of the locking 2-toggles start in leaf states is \PSPACE-complete.
\end{corollary}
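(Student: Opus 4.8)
The plan is to give a local-replacement reduction from unrestricted 1-player planar motion planning with locking 2-toggles, which is \PSPACE-complete by~\cite{demaine2018general}, to the same problem restricted so that every locking 2-toggle starts in a leaf state. Containment in \PSPACE is immediate, since the leaf-start problem is a special case of the general one (and the nondeterministic guess-and-check argument of Lemma~\ref{lem:pull-in-PSPACE} applies to any gadget motion-planning problem). For hardness, I would start with an arbitrary instance, in which some locking 2-toggles begin in the middle (unlocked) state rather than a leaf state, and replace each such middle-state gadget, in place, by a small gadget-system built entirely out of leaf-state locking 2-toggles that behaves externally exactly like one middle-state locking 2-toggle. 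Every gadget already in a leaf state is left untouched, yielding an equivalent instance in which all gadgets start in leaf states.

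The simulating system is exactly the construction of Figure~\ref{fig:nl2t-to-l2t}, but with the two nondeterministic locking 2-toggles replaced by ordinary locking 2-toggles in leaf states; since a 1-toggle is obtained by using a single tunnel of a locking 2-toggle, the third component is likewise a leaf-state locking 2-toggle, so all three constituent gadgets start in leaf states as required. I would then argue, following the proof of Theorem~\ref{thm:nlt} essentially verbatim, that this system presents two external tunnels with the behavior of a middle-state locking 2-toggle: from the simulated middle state the agent may traverse either external tunnel in one direction, routing through the bottom 1-toggle and up the appropriate side tunnel and thereby committing the corresponding internal gadget, landing the system in one of two simulated leaf states; and from a simulated leaf state the only productive traversal is the reverse one, which restores the simulated middle state.

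The main obstacle is verifying faithfulness of this simulation: I must check that the agent cannot obtain any traversal not offered by a genuine middle-state locking 2-toggle, and in particular that the locking property is preserved, so that once one external tunnel has been entered the other is unavailable until the traversal is reversed. This is the same case analysis used for the nondeterministic construction: tracing every entry point, one shows that apart from backtracking the only available paths are the intended ones, and that an agent attempting to misuse the internal tunnels either makes no progress or strands itself in a configuration from which no new external traversal is possible. Because the construction of Figure~\ref{fig:nl2t-to-l2t} is planar and uses only a constant number of gadgets, each replacement is a constant-size, planarity-preserving local substitution, so the overall reduction runs in polynomial time, preserves planarity, and leaves reachability of the goal location unchanged. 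This gives the claimed \PSPACE-completeness, for the planar case and hence a fortiori for the general case.
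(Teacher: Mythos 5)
Your proposal matches the paper's argument: the paper likewise builds the construction of Figure~\ref{fig:nl2t-to-l2t} out of leaf-state locking 2-toggles (noting the 1-toggle is one tunnel of such a gadget), argues as in Theorem~\ref{thm:nlt} that it simulates a middle-state locking 2-toggle, and substitutes it for every gadget starting in a nonleaf state. This is essentially the same approach, with your write-up just being more explicit about planarity and \PSPACE containment.
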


\later{
\section{3-port Self-Closing Door}
\label{app:self-closing door}

Ani et al.~\cite{doors} proved \PSPACE-completeness of 1-player planar motion planning with many types of self-closing door gadgets and all of their planar variations.
For completeness, we give a proof specific to the 3-port self-closing door gadget in this section.
Our proof is more succinct because it does not consider other variants of the gadget.
The reduction is from 1-player motion planning with the door gadget from \cite{nintendoor}. 

\begin{theorem}\label{thm:scd}
1-player planar motion planning with the 3-port self-closing door is \PSPACE-hard.
\end{theorem}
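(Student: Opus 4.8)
The plan is to reduce from 1-player motion planning with the door gadget, which is \PSPACE-hard \cite{nintendoor} and remains so in the planar setting. Given a door-gadget system, I would replace every door gadget by a small fixed assembly built from 3-port self-closing doors, rewire the door's six ports (open-in, open-out, close-in, close-out, traverse-in, traverse-out) into that assembly, and leave the rest of the connection graph untouched. The construction is clearly polynomial, so the entire content is showing that the assembly \emph{faithfully} simulates a door gadget and that the wiring stays planar.

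The core idea is to let a single self-closing door hold the door's state, identifying its open/closed state with the door's open/closed state. The three door operations then map naturally to sequences of port-uses. The open tunnel routes the agent through the opening port, which is available in either state and leaves the gadget open. The traverse tunnel routes the agent through the self-closing tunnel and then immediately through the opening port: it is passable exactly when the door is open (the self-closing tunnel is blocked when closed, so the traverse route cannot even begin), and it restores the open state afterward, matching a stateless traverse. The close tunnel routes the agent first through the opening port and then through the self-closing tunnel, so it is always passable and always leaves the gadget closed, matching the door's always-passable close tunnel.

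The main obstacle is faithfulness against \emph{all} agent behaviors, not just the intended ones. Because the three simulated tunnels share the single self-closing door's ports, a naive wiring admits illegal ``mixed'' traversals: for instance, since both the traverse route and the close route pass through the self-closing tunnel, the agent could enter the traverse tunnel, close the state, and leave through the close exit, realizing a traverse that changes state; similarly, entering the close tunnel, opening the door, and exiting at a traverse port realizes an illegal open. To forbid these I would separate the conflicting routes, either by introducing a few additional self-closing doors as one-way locks (a self-closing door can be reopened after each forward use to act as a reusable one-way) or by exploiting the directionality of the self-closing tunnels so that every maximal path through the assembly enters and exits at a matching pair of external ports. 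The heart of the proof is then a case analysis over the external port at which the agent enters the assembly, showing that every maximal internal walk is either a no-op (backtracking) or corresponds to exactly one legal door operation and leaves the held state consistent; reachability in the constructed instance then coincides with reachability in the original door-gadget instance.

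Finally, I would verify planarity: the assembly is arranged so that its six external ports appear in the cyclic order required to match the planar door gadget, and the constant number of self-closing doors together with their internal wiring can be laid out without crossings. This makes the reduction planarity-preserving and yields planar \PSPACE-hardness.
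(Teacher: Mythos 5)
Your high-level plan coincides with the paper's: reduce from 1-player motion planning with the door gadget of \cite{nintendoor}, and realize the door's open, traverse, and close tunnels as routes through opening ports and self-closing tunnels of an assembly of 3-port self-closing doors. You also correctly identify the central obstacle, namely that with a single self-closing door holding the state, the three routes share its one opening port and one self-closing tunnel, so illegal mixed traversals (e.g., entering at the traverse entrance and leaving at the close exit) are possible. But you stop exactly where the real work begins, and neither of your proposed fixes is as easy as you suggest. Reusable one-way locks (diodes) only control the direction of travel along a route; they cannot prevent the agent from entering one tunnel's entry port and exiting a different tunnel's exit port when the routes must share the single tunnel's exit location --- and with one self-closing tunnel they must, since both the traverse continuation and the close exit have to be reachable from that exit. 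The paper's resolution is a concrete multi-stage construction: a \emph{port duplicator} (a self-closing door with several equivalent opening ports), an intermediate two-door gadget, a \emph{crossover} built from it, and then a \emph{door duplicator} giving two self-closing tunnels synchronized to a common open/closed state. The final door simulation uses two synchronized tunnels --- one dedicated to traverse (re-opened after each crossing via its own opening port) and one to close --- plus three separate opening ports, so the three routes are internally disjoint and no mixing can occur. Without some such duplication machinery your case analysis cannot be completed.

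The planarity step is a second gap. You assume the door-gadget problem is hard in the planar setting and that your assembly's six external ports can be laid out in whatever cyclic order the planar instance demands; the former depends on which planar variant of the door is actually known hard, and the latter is not automatic. The paper sidesteps both issues by first showing that 3-port self-closing doors planarly simulate a crossover, which lets it start from non-planar door-gadget hardness and then planarize everything. That crossover is moreover used internally to build the door duplicator, so it is a load-bearing part of the construction rather than an optional planarity patch.
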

\begin{proof}
We will show that the 3-port self-closing door planarly simulates a crossover, which lets us ignore planarity.
We will then show that the 3-port self-closing door simulates the door gadget. Because 1-player motion planning with the door gadget is \PSPACE-hard \cite{nintendoor}, so is 1-player motion planning with the 3-port self-closing door, and because it simulates a crossover, so is 1-player planar motion planning with the 3-port self-closing door. Along the way, we will construct a self-closing door with multiple door and opening ports as well as a diode.

\paragraph{Diode.}We can simulate a diode (one-way tunnel which is always traversable)
by connecting the opening port to the input of the self-closing tunnel. The agent can always go to the opening port
and then through the self-closing tunnel, but can never go the other way because the self-closing tunnel is directed.

\paragraph{Port Duplicator.} The construction shown in Figure~\ref{fig:scd-port-duplicator} simulates a self-closing door with two equivalent opening ports. If the agent enters from the top, it can
open only one of the upper gadgets, then open the lower gadget, and then must exit the same way it came. Note, this same idea can be used to construct more than two ports, which will be needed later.

\begin{figure}
	\centering
	\includegraphics[scale=.8]{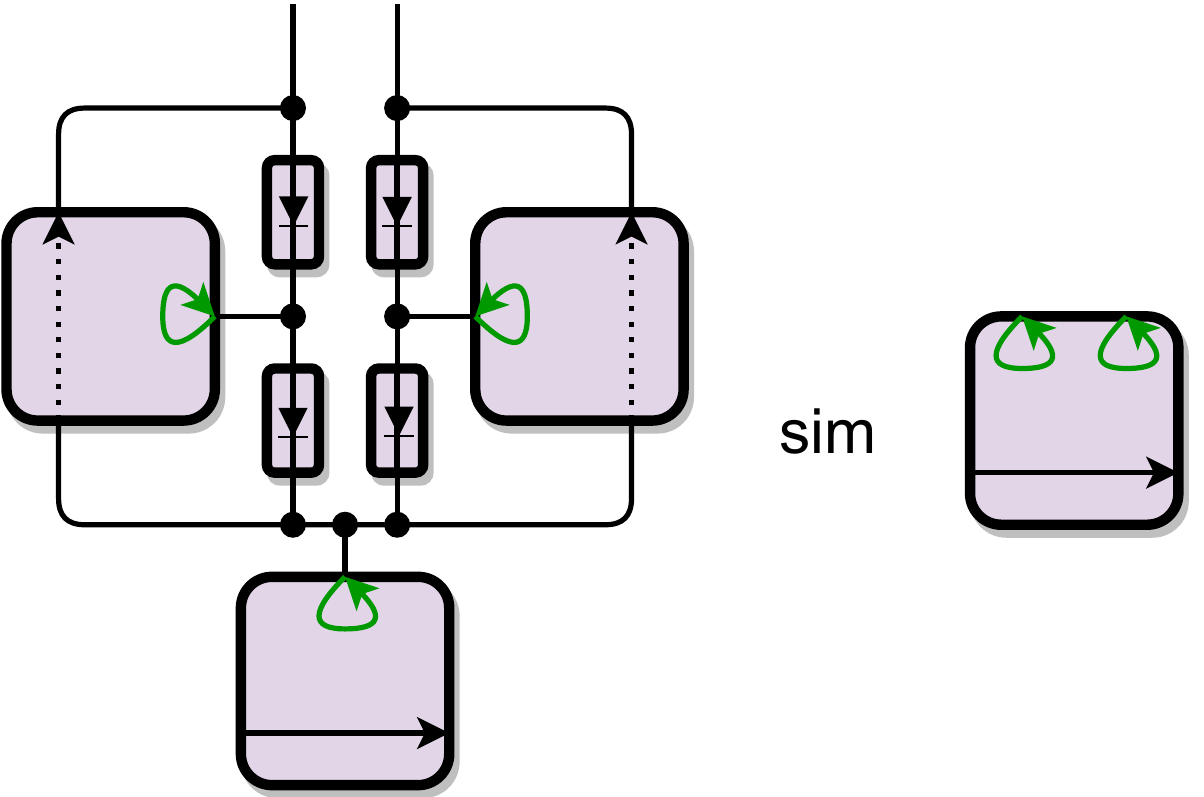}
	\caption{3-port self-closing door simulating a version of it that has 2 opening ports. Opening ports are shown in green.
	A dotted self-closing tunnel is closed, and a solid self-closing tunnel is open.}
	\label{fig:scd-port-duplicator}
\end{figure}

We use these to simulate an intermediate gadget composed of two of self-closing doors each connected to two opening ports in a particular order arrangement, shown in Figure~\ref{fig:planar-scd-crossoverish}.  If the agent enters from port 1 or 4,
it will open door E or F, respectively, and then leave. If the agent enters from port 2, it can open doors A, B, and C. If it then traverses door B and opens door E, it will get stuck because both B and D are closed. So the agent cannot open door E and exit.
Instead, it can traverse doors B and A, ending up back at port 2 with no change except that door C is open. Entering
port 2 or 3 always gives the agent an opportunity to open door C, so leaving door C open does not help.
So the only useful path after entering port 2 is to traverse door C. The agent is then forced to go right and can open door F. Then
it is forced to traverse door B. Again if the agent opens door E, it will be stuck, so the agent traverses door A instead and
returns to port 2, leaving door F open. 
Similarly, if the agent enters from port 3, the only useful thing it can do is open
door E and return to port 3.

\begin{figure}[t]
	\centering
	\includegraphics[scale=.8]{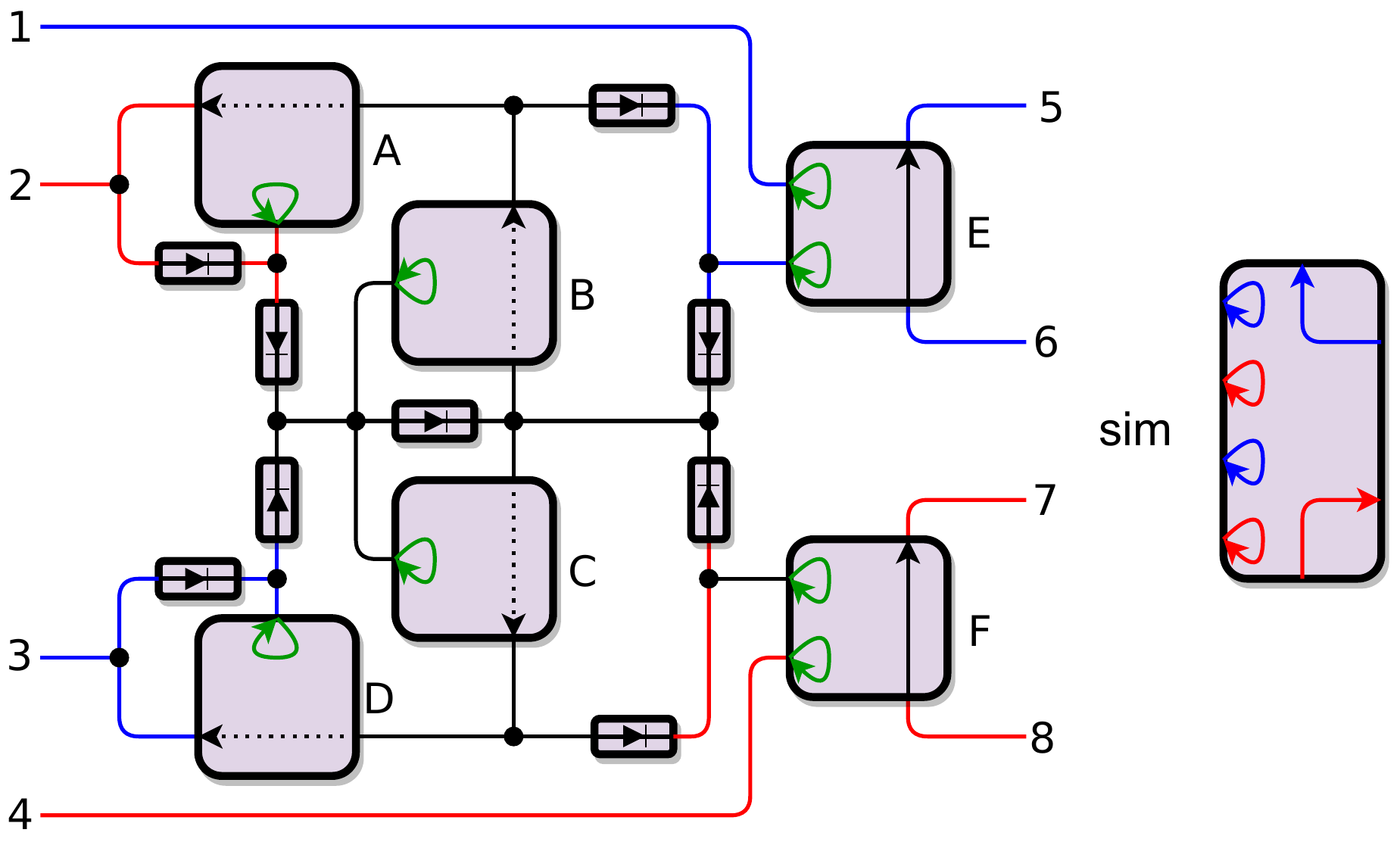}
	\caption{3-port self-closing door simulating the gadget on the right, where each port opens the door of the same color (the top and third-from-top open the top door, and the others open the bottom door).}
	\label{fig:planar-scd-crossoverish}
\end{figure}

\paragraph{Crossover.} This intermediate gadget can simulate a directed crossover, shown in Figure~\ref{fig:planar-scd-crossover}. If the agent enters at the top left, it can open the left door on the top gadget, open both doors on the bottom gadget, and then exit the bottom right while closing all three opened doors. If the agent opens both doors on the top gadget it will get stuck. Similarly if the agent enters the bottom left, all it can do is exit the top right.
The directed crossover can simulate an undirected crossover, as in
Figure~\ref{fig:dir-crossover} and shown in \cite{Push100}.

\begin{figure}
	\centering
	\includegraphics[scale=.8]{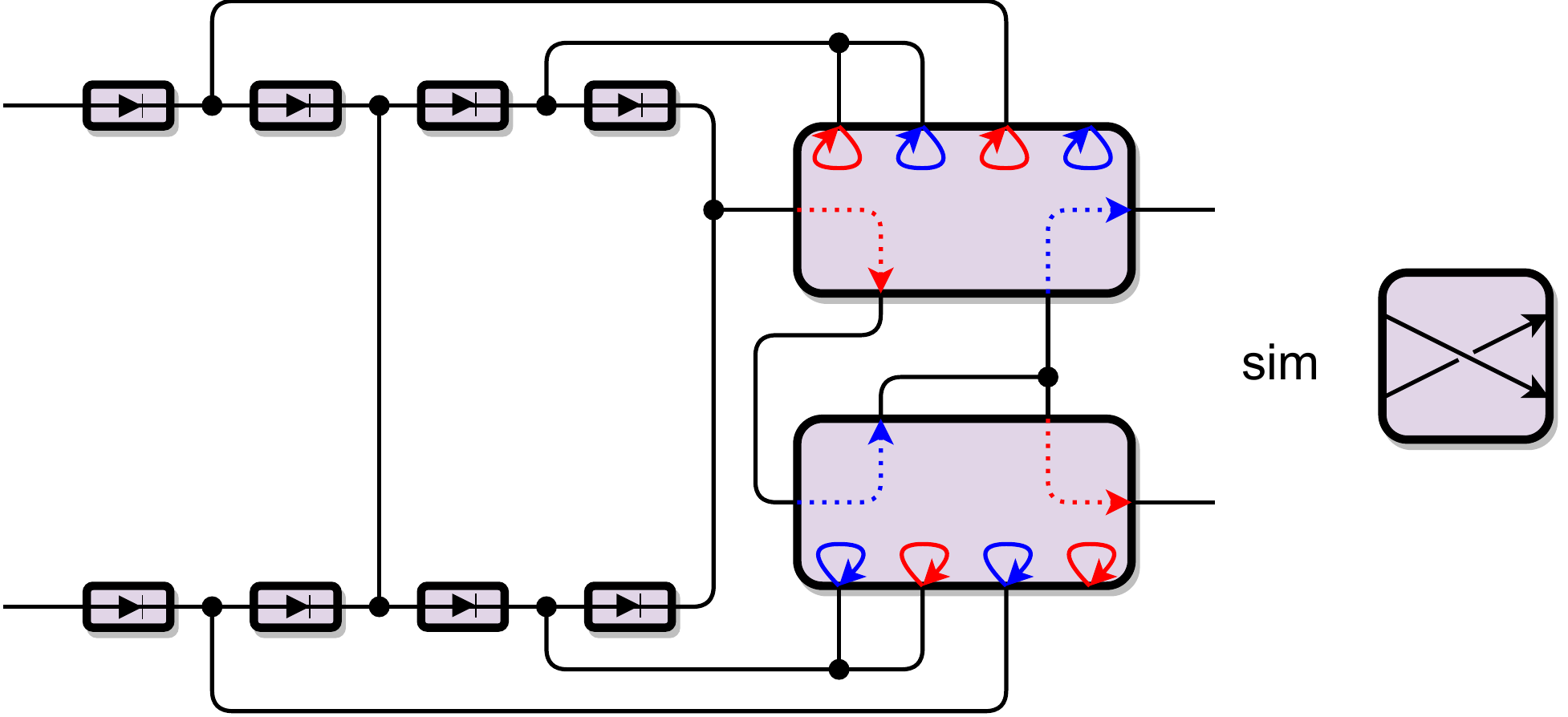}
	\caption{3-port self-closing door simulating a crossover.}
	\label{fig:planar-scd-crossover}
\end{figure}

\begin{figure}
	\centering
	\includegraphics[scale=.8]{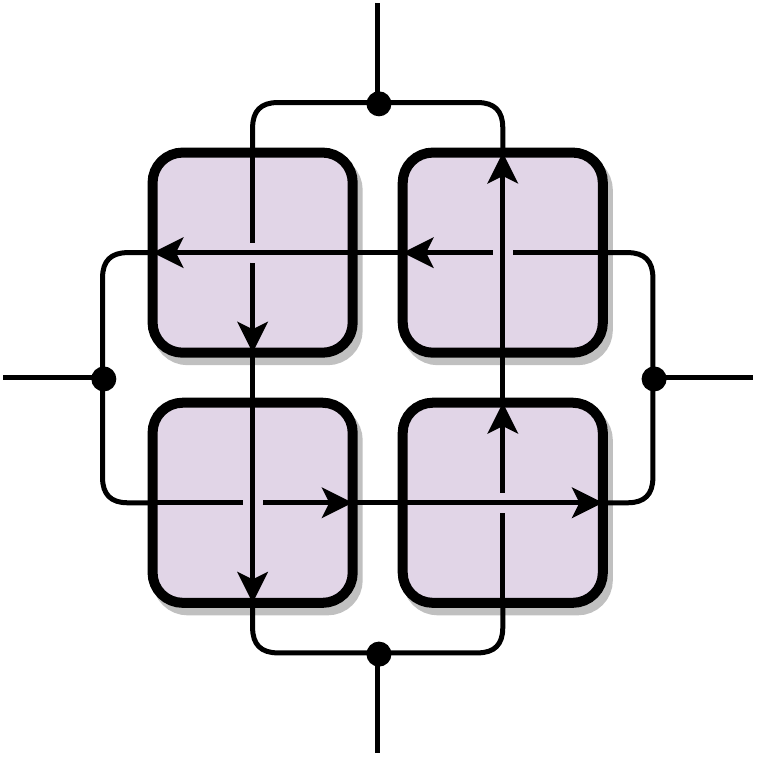}
	\caption{Directed crossover simulating an undirected crossover.}
	\label{fig:dir-crossover}
\end{figure}

\paragraph{Door Duplicator.} Now, we use this crossover to simulate a gadget with two self-closing doors controlled by the same opening port, as shown in Figure~\ref{fig:scd-tunnel-duplicator}. This gadget has two states, open and closed. Both doors are either open or closed and going through either door closes both of them.
The construction is similar to the construction for the port duplicator, but goes through a tunnel instead.

\begin{figure}
	\centering
	\includegraphics[scale=.8]{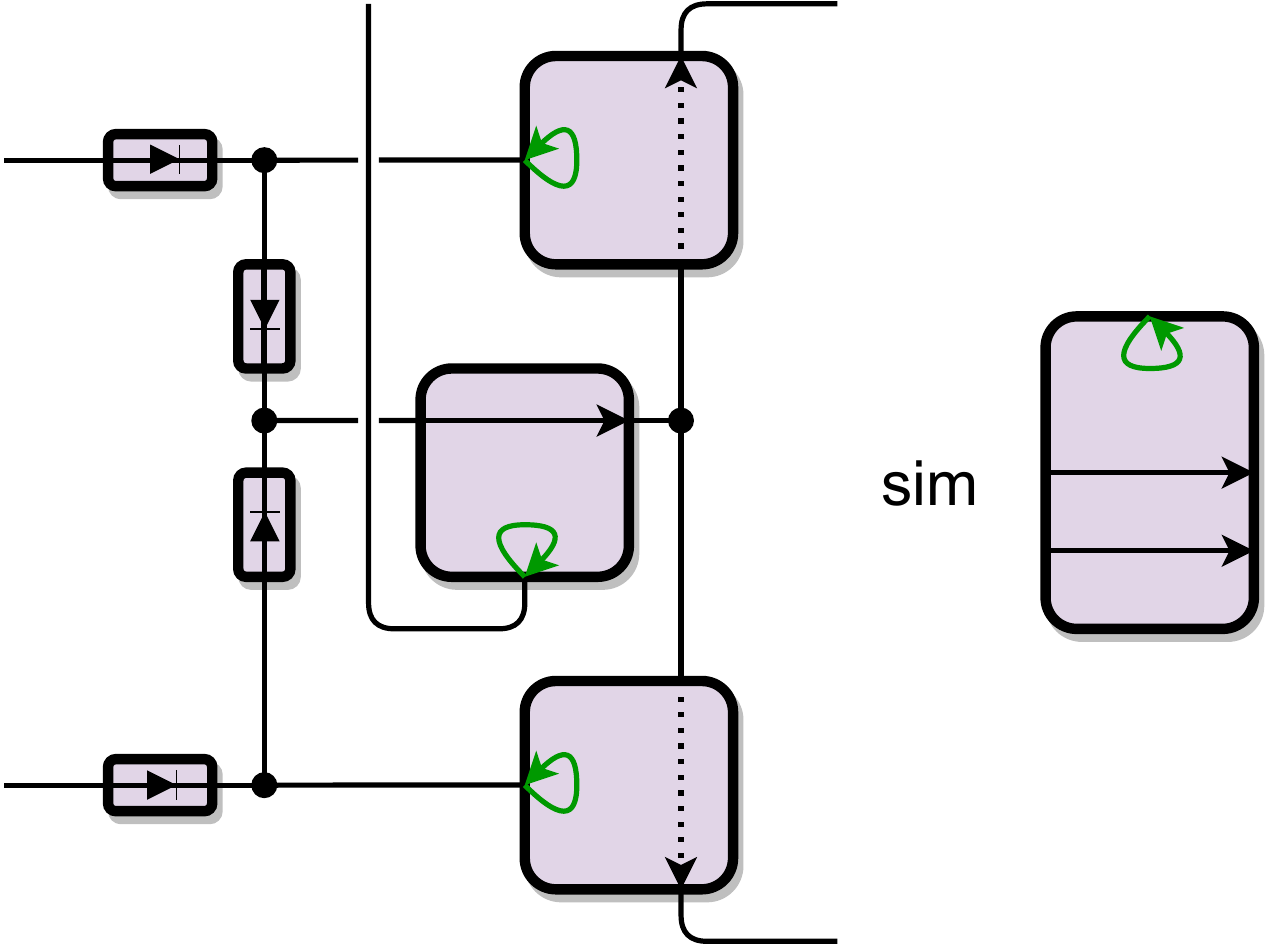}
	\caption{3-port self-closing door simulating a gadget with 2 self-closing tunnels.}
	\label{fig:scd-tunnel-duplicator}
\end{figure}

\paragraph{Door Gadget.} Finally, we triplicate the opening port by adding a third entrance to the construction in Figure~\ref{fig:scd-port-duplicator} similar to the other two, and use these ports to simulate a door gadget as shown
in Figure~\ref{fig:scd-otc}. Recall the whole three-port two-door gadget has only two states, open and closed. The agent can open both doors from any of the open ports and going across either self-closing door will close both doors. If the agent enters from port $O$, it can open the doors and leave.
If the agent enters from port $T_0$ and the gadget is open, the agent can traverse the door and then reopen it using the third port. The agent then leaves at port $T_1$. If the agent enters from port $C_0$, it can open the gadget and then must traverse the bottom tunnel and leave at port $C_1$, closing the
gadget. 
\end{proof}

\begin{figure}
	\centering
	\includegraphics[scale=.8]{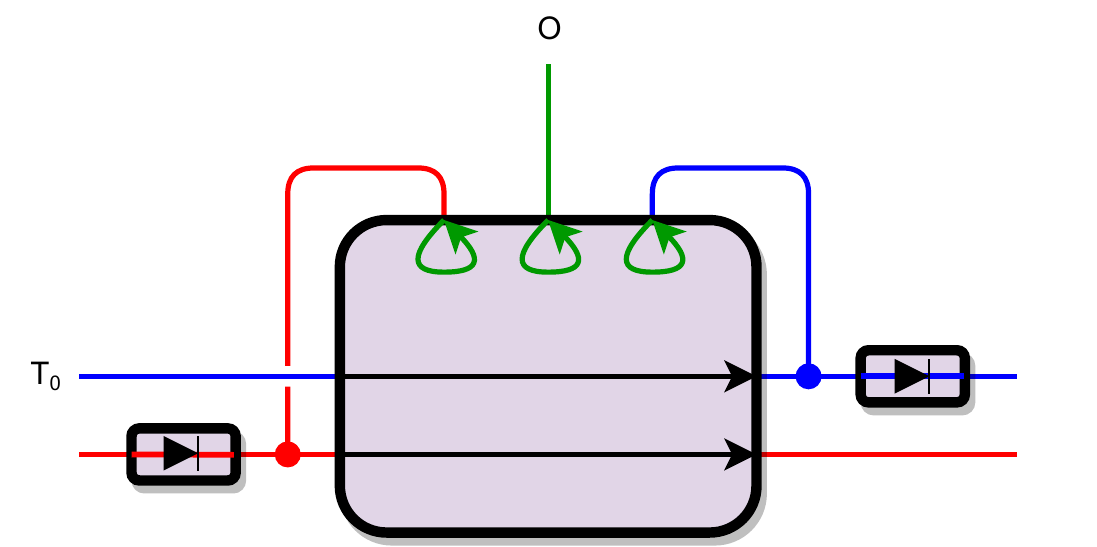}
	\caption{Simulation of the door gadget in \cite{nintendoor} using a gadget with 3 opening ports and 2 self-closing tunnels.}
	\label{fig:scd-otc}
\end{figure}
}

\subsection{\texorpdfstring{\PullkFG}{Pull?-kFG}}
\label{sec:optional pull}

In this section, we show that several versions of pulling-block problems with optional pulling and gravity are \PSPACE-complete by a reduction from 1-player motion planning with nondeterministic locking 2-toggles, shown \PSPACE-hard in Section~\ref{sec:nondet2toggle}.

We begin with a construction of a 1-toggle, and then use those and an intermediate construction to build a nondeterministic 2 toggle.

\paragraph{1-toggle.}

A \emph{1-toggle} is a gadget with a single tunnel, traversable in one direction.  When the agent traverses it, the direction that it can be traversed is flipped, meaning that the agent must backtrack and return the way it came in order to be able to traverse it the first way again.

Our 1-toggle construction in \PullkFG for $k\geq2$ is shown in Figure~\ref{fig:1toggle}.  In the state shown, it can only be traversed from left to right by pulling both blocks to the left.  This traversal flips the direction that the gadget can be traversed---it can now only be traversed from right to left.

\begin{figure}
	\centering
	\includegraphics[width=0.2\textwidth]{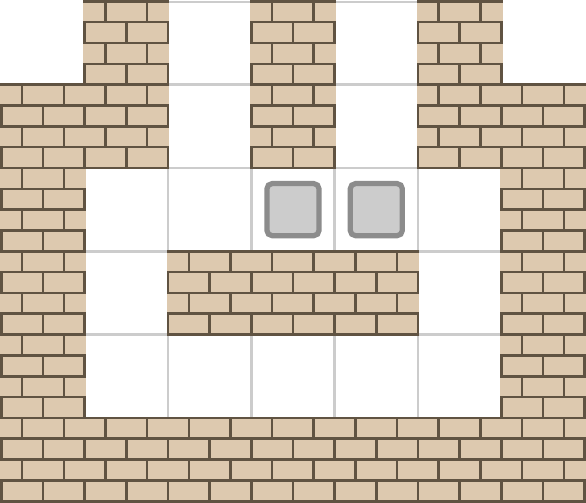}
	\caption{1-toggle in \PullkFG[2].}
	\label{fig:1toggle}
\end{figure}

\paragraph{Nondeterministic Locking 2-toggle.}

Our construction of a \nlt, shown in Figure~\ref{fig:locking2toggle}, uses two 1-toggles plus a connecting section at the top.

 \begin{figure}[ht]
	\begin{minipage}[b]{0.55\linewidth}
		\centering
	\includegraphics[width=0.9\textwidth]{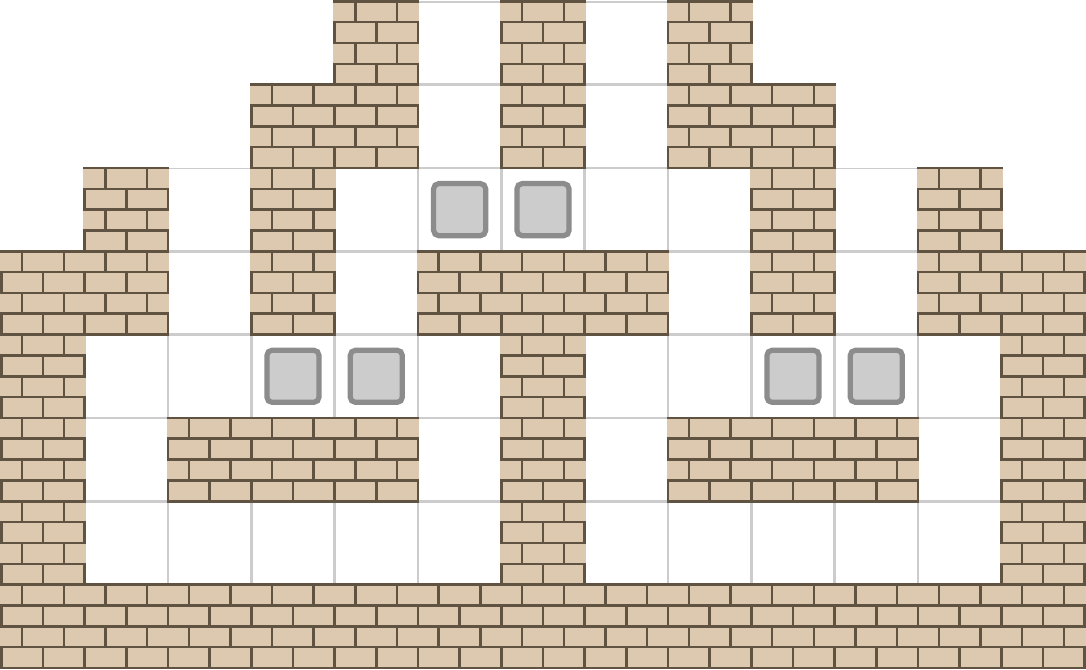}
	\caption{Locking 2-toggle in \PullkFG[2].}
	\label{fig:locking2toggle}
	\end{minipage}
	\hspace{0.3cm}
	\begin{minipage}[b]{0.42\linewidth}
		\centering
	\includegraphics[width=0.9\textwidth]{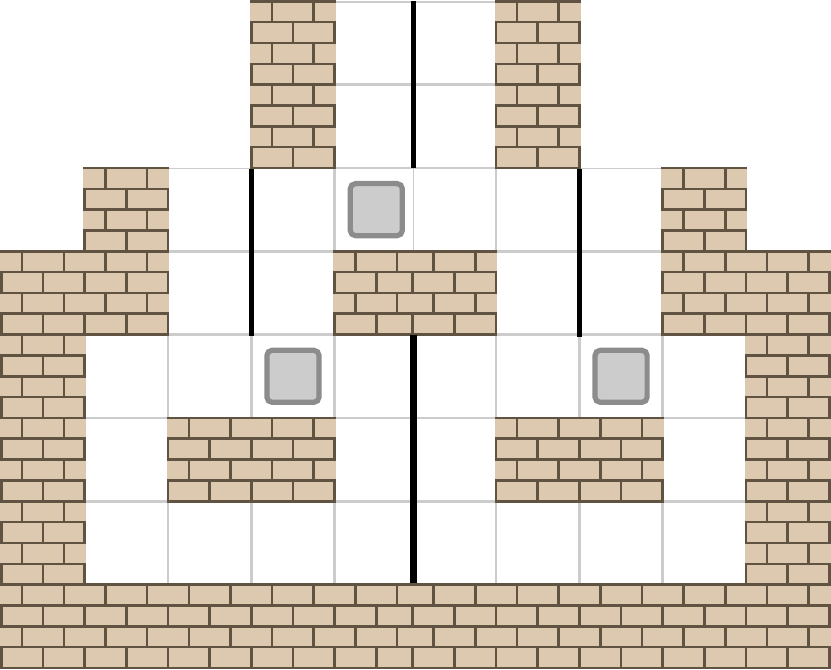}
	\caption{Locking 2-toggle in \PullkWG[1].}
	\label{fig:locking2toggle-W}
	\end{minipage}
\end{figure}

The configuration shown in Figure~\ref{fig:locking2toggle} is a leaf state. The right tunnel is traversable from to right to bottom right.  If the agent traverses that tunnel, it can choose whether to pull the top pair of blocks to the right (because pulling is optional), corresponding to the nondeterministic choice in the \nlt. Both 1-toggles will be in the state where they can be traversed from bottom (outside) to top (inside). One of these paths will be blocked by the top pair of blocks and the other will be traversable, depending on whether the agent chose to pull those blocks. Traversing the traversable path then puts the gadget in a leaf state, either the one shown or its reflection.

It is possible for the agent to pull only one block instead of two, but this can only prevent future traversals, so never benefits the agent.

\begin{theorem}
\label{thm:pull-kFG-PSPACE-complete}
\PullkFG is \PSPACE-complete for $k \ge 2$ and $k=*$.
\end{theorem}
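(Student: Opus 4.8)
The plan is to get containment in \PSPACE for free from Lemma~\ref{lem:pull-in-PSPACE}, and then prove \PSPACE-hardness by reducing from 1-player planar motion planning with the \nlt, which is \PSPACE-complete by Theorem~\ref{thm:nlt}. Given such an instance---a planar system of \nlt gadgets together with a start location and a goal location---I would replace each abstract \nlt by the \PullkFG[2] gadget of Figure~\ref{fig:locking2toggle} (itself assembled from the two 1-toggles of Figure~\ref{fig:1toggle}), place the agent on the tile corresponding to the start location, and place the goal tile at the location corresponding to the motion-planning goal. The discussion preceding the theorem already argues that this block gadget realizes the four states and the two per-state transitions of the \nlt, so the heart of the simulation is in hand; what remains is the wiring and a careful check that gravity and the optional, strength-$k$ pulling rule do not leak any extra power.

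Next I would realize the connection graph by routing corridors of empty squares bounded by fixed blocks along the given planar embedding, using straight tunnels, corner pieces, and branching hallways to connect the gadget ports. Because these corridors contain no movable blocks, gravity is irrelevant inside them, and since the agent's motion is unaffected by gravity it can traverse every corridor freely in either direction, in particular upward. Planarity of the layout is immediate: the source instance is already planar and the gadget of Figure~\ref{fig:locking2toggle} is a planar block configuration, so the whole grid instance embeds in the plane and no crossover gadget is needed.

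I would then verify stability under gravity together with robustness to optional and higher-strength pulls. Every movable block in each of the gadget's four states must be supported---resting on a fixed block or on another supported block---so that after any agent move all blocks settle back into well-defined positions and the gadget state stays well defined; this is designed into the 1-toggle of Figure~\ref{fig:1toggle} and its assembly. For optionality, the paper already notes that pulling only one of the two blocks rather than both can merely forfeit a future traversal and never helps the agent, so optional pulling confers no advantage. For strength, the construction only ever asks the agent to pull a collinear line of two blocks, so I would confirm that the fixed blocks bounding each movable line cap its length at two (or that pulling a longer line is blocked or useless); this makes the same construction valid verbatim for every $k \ge 2$ and for $k = *$.

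The main obstacle is the gravity-robustness invariant rather than the high-level simulation: I must show that in \emph{every} reachable configuration---including partial traversals and configurations produced by pulling a single block---all blocks remain supported and fall back to their intended positions, and that the agent can never combine a pull with the subsequent fall to extract a block from a gadget and thereby reach an otherwise-unreachable tile or create a beneficial state. Once this invariant is established, each gadget behaves exactly as an \nlt and each corridor behaves as a connection, so the agent can reach the goal tile if and only if the \nlt motion-planning instance is solvable, completing the reduction and, with Lemma~\ref{lem:pull-in-PSPACE}, the proof.
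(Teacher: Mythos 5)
Your proposal matches the paper's proof: containment from Lemma~\ref{lem:pull-in-PSPACE}, then a reduction from 1-player planar motion planning with the \nlt (Theorem~\ref{thm:nlt}) by embedding the planar gadget network in a grid and replacing each gadget with the construction of Figure~\ref{fig:locking2toggle}, noting that the agent only ever pulls two blocks at a time so the argument covers all $k \ge 2$ and $k = *$. The extra attention you give to gravity-stability of every reachable configuration is a reasonable point of care but is handled implicitly by the paper's gadget design rather than as a separate invariant.
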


\begin{proof}
Lemma~\ref{lem:pull-in-PSPACE} gives containment in \PSPACE.
For hardness, we reduce from 1-player planar motion planning with the \nlt, shown \PSPACE-hard in Theorem~\ref{thm:nlt}. We embed any planar network of gadgets in a grid, and replace each \nlt with the construction described above in the appropriate state. The resulting pulling-block problem is solvable if and only if the motion planning problem is.

This reduction works for \PullkFG for any $k \ge 2$ including $k=*$, because the player only ever has the opportunity to pull 2 blocks at a time. This proof requires optional pulling because the player must choose whether to pull blocks while traversing a \nlt.
\end{proof}

\begin{corollary}
\PullkWG is \PSPACE-complete for $k\ge1$ and $k=*$.
\end{corollary}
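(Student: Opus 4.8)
The plan is to mirror the proof of Theorem~\ref{thm:pull-kFG-PSPACE-complete}, again reducing from 1-player planar motion planning with the \nlt (shown \PSPACE-hard in Theorem~\ref{thm:nlt}), but substituting the thin-walls construction of the \nlt shown in Figure~\ref{fig:locking2toggle-W} for the fixed-block, strength-2 construction of Figure~\ref{fig:locking2toggle}. Containment in \PSPACE is immediate from Lemma~\ref{lem:pull-in-PSPACE}, which covers the thin-walls variants as well. For hardness, I would take the given planar network of \nlt{}s, embed it in a grid scaled enough to fit the gadget, replace each \nlt by a copy of the thin-walls construction placed in the corresponding state, and route the tunnels along grid edges exactly as in the fixed-block reduction. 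The resulting \PullkWG instance is then solvable if and only if the motion-planning instance is.

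The essential new ingredient is that a thin wall can act as a ledge holding a \emph{single} block against gravity without occupying a full cell, so the gadget need not rely on pulling a pair of blocks simultaneously the way the fixed-block version does. First I would verify that Figure~\ref{fig:locking2toggle-W} faithfully reproduces the state space of the \nlt from Figure~\ref{fig:nl2t}: in the depicted leaf state the right tunnel admits exactly the single traversal shown, the agent may optionally pull the relevant block to realize the nondeterministic choice of which leaf state to enter, precisely one of the two return paths is then open according to that choice, and completing it leaves the gadget in a leaf state (the one shown or its reflection). I would then confirm that no ``cheating'' traversal exists: any pull other than those realizing an intended transition either strands a block so the agent becomes trapped or merely forfeits a future traversal, so it never yields a motion unavailable in the abstract gadget.

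Because this construction only ever presents the agent with the opportunity to pull one block at a time, it works verbatim for every strength $k \ge 1$ and for $k=*$---extra strength is never usable---while optional pulling is required precisely to realize the nondeterministic choice. Combining faithful simulation with \PSPACE-containment completes the proof, and the improvement to $k = 1$ over Theorem~\ref{thm:pull-kFG-PSPACE-complete} comes entirely from the ledge trick. The main obstacle is the case analysis establishing faithfulness under gravity at strength~$1$: in particular, checking that the downward fall applied after each move never shifts a ledge-supported block in a way that opens an unintended path or prematurely relocks a tunnel. Once that verification goes through, the reduction itself is routine.
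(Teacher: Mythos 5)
Your proposal matches the paper's proof: both substitute the thin-wall locking 2-toggle of Figure~\ref{fig:locking2toggle-W} (where a thin wall separates the tunnels so each toggle needs only one block) into the reduction of Theorem~\ref{thm:pull-kFG-PSPACE-complete}, yielding hardness for all $k \ge 1$. The extra verification steps you outline are reasonable but not treated in more detail by the paper either, so the approaches are essentially identical.
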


\begin{proof}
With thin walls, the tunnels can be separated by a thin wall instead of a fixed block, which means that only one block is required in each of the toggles. This is shown in Figure~\ref{fig:locking2toggle-W}.  The rest of the proof follows in the same manner, demonstrating \PSPACE-completeness of \PullkWG for $k \ge 1$.
\end{proof}

\subsection{\texorpdfstring{\PullkFG[$k$][!]}{Pull!-kFG}}
\label{sec:mandatory gravity}

In this section, we show \PSPACE-completeness for pulling-block problems with forced pulling and gravity, using a reduction from 1-player planar motion planning with the 3-port self-closing door, shown \PSPACE-hard in Theorem~\ref{thm:scd}.

\begin{theorem}
\label{thm:pull!-kFG-PSPACE-complete}
\PullkFG[$k$][!] is \PSPACE-complete for $k \ge 1$ and $k=*$.
\end{theorem}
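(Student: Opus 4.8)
The plan is to prove \PSPACE-completeness in two parts. Containment in \PSPACE is immediate from Lemma~\ref{lem:pull-in-PSPACE}, which applies to every pulling-block variant we define, so I would cite it directly. For \PSPACE-hardness, I would reduce from 1-player planar motion planning with the 3-port self-closing door, which is \PSPACE-hard by Theorem~\ref{thm:scd}. Since that problem is already planar, I can embed the whole gadget network directly into the grid without introducing crossings; the only substantive work is to build a single pulling-block gadget that faithfully simulates the 3-port self-closing door under forced pulls and gravity, together with corridors (wires) connecting the gadget ports.

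The heart of the construction is to exploit the interaction between forced pulling and gravity. The self-closing door has exactly two states, open and closed, a directed self-closing tunnel that becomes closed whenever it is traversed, and an opening port that restores the open state. I would represent the open state by a single movable block resting on a fixed-block ledge above an otherwise clear self-closing tunnel, and the closed state by that same block sitting at the bottom of the tunnel, obstructing it. The key observation is that traversing the tunnel requires the agent to step past the block, and because pulling is \emph{forced}, the agent must drag the block off its ledge; gravity then drops the block into the tunnel, closing it. Thus the self-closing behaviour is not an extra constraint we must impose but a direct consequence of mandatory pulling---the agent literally cannot traverse the tunnel without closing it. The opening port would be a separate entrance from which the agent can pull the fallen block horizontally back onto the ledge (again a forced move, but here a helpful one), re-establishing the open state. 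One-way passages (diodes), buildable from fixed-block geometry as in Figure~\ref{fig:diode}, would enforce the directedness of the tunnel and the opening port.

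The bulk of the correctness argument is a case analysis showing the gadget admits no cheating. I would enumerate the reachable configurations and verify: (i) from the open state the only useful traversal closes the tunnel; (ii) from the closed state the tunnel is genuinely impassable, and the only way to reopen it is through the opening port; and (iii) every other pull---in particular any attempt to drag the block out of the gadget or into a connecting corridor---either has no effect or lands the agent in an unrecoverable configuration, exactly as in the no-gravity gadgets of Section~\ref{sec:no gravity}. I also need corridors to be inert: since wires carry no movable blocks adjacent to the agent's path, each forced pull along a corridor pulls nothing and each gravity step leaves the corridor unchanged, so traversals between gadgets behave as plain edges. Finally, because at most one block is ever adjacent to the agent inside the gadget, the forced pull moves exactly that one block regardless of strength, so the construction works verbatim for every $k \ge 1$ and for $k = \ast$.

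I expect the main obstacle to be the geometry of the single gadget: arranging the ledge, the tunnel, and the opening-port route so that (a) the forced pull during traversal drops the block into precisely the blocking cell, (b) gravity never moves the block out of either intended resting position between moves, and (c) the forcedness of pulling cannot be weaponized by the agent to extract the block and manufacture an unintended connection or shortcut. Getting all three to hold simultaneously, while keeping the ports correctly one-way under gravity, is the delicate part; the surrounding reduction, and the reduction from the planar self-closing-door problem itself, are then routine.
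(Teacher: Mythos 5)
Your overall strategy coincides with the paper's: containment via Lemma~\ref{lem:pull-in-PSPACE}, hardness by reducing from 1-player planar motion planning with the 3-port self-closing door (Theorem~\ref{thm:scd}), building a diode plus a single self-closing-door gadget, and observing that the agent never has the opportunity to pull more than one block, so the construction is valid for every $k \ge 1$ and for $k = *$. There are, however, two concrete problems with the way you fill in the gadgets. First, you propose to take the diodes ``as in Figure~\ref{fig:diode}''; that diode is designed for the no-gravity, optional-pull setting, and its movable blocks would both fall under gravity and be dragged by forced pulls, so it does not transfer. The paper must build a dedicated diode for \PullkFG[$k$][!] (Figure~\ref{fig:pull!-kFG-diode}), in which an agent entering from the left is forced to pull a block aside and then forced to pull it back as it passes; this diode is itself the main technical content of the section, not a reusable off-the-shelf component. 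Second, your closing mechanism (gravity drops a block off a ledge into the tunnel during traversal) differs from the paper's (the agent exiting through the bottom is \emph{forced to pull} the block back into blocking position, with gravity serving mainly to keep blocks where they are put); your mechanism is plausible, but you explicitly leave conditions (a)--(c) unverified, and that case analysis --- showing the block cannot be extracted, that the closed state is genuinely impassable, and that the ports are correctly one-way --- is where essentially all of the proof lives. So the architecture is right, but the argument is not complete until the diode is redesigned for forced pulls with gravity and the door gadget's geometry is pinned down and checked against cheating traversals.
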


\begin{proof}
Lemma~\ref{lem:pull-in-PSPACE} gives containment in \PSPACE.
We show \PSPACE-hardness by a reduction from 1-player planar motion planning with the 3-port self-closing door. It suffices to construct a 3-port self-closing door in \PullkFG[$k$][!].

First, we construct
a diode, shown in Figure~\ref{fig:pull!-kFG-diode}. The agent cannot enter from the right. If the agent enters from the left,
it must pull the left block to the left to advance. If it pulls the left block left and then exits, they still cannot enter from the right,
so doing so is useless. The agent then advances and is forced to pull the left block back to its original position. The agent then must
pull the right block left to advance, and must actually advance because the way back is blocked. As the agent exits the gadget, it
is forced to pull the right block back to its original position. Therefore, the agent can always cross the gadget from left to right
and never from right to left, simulating a diode.

\begin{figure}
\centering
\includegraphics[width=0.5\textwidth]{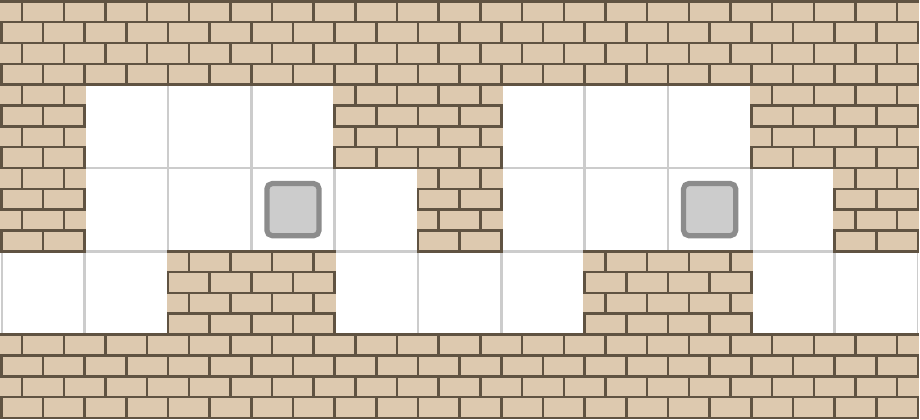}
\caption{A diode in \PullkFG[$k$][!].}
\label{fig:pull!-kFG-diode}
\end{figure}

Using this diode, we then construct a 3-port self-closing door, shown in Figure~\ref{fig:pull!-kFG-scd}; the diode icons indicate the diode shown in Figure~\ref{fig:pull!-kFG-diode}. The bottom is exit-only. In the closed
state, the agent should not enter from the top because it would become trapped between a block and the wrong end of a diode. The
agent can enter from the right, pull the block 1 tile right, and leave, opening the gadget. In the open state, the agent can enter
from the top and exit out the bottom, and is forced to pull the block back to its original position, closing the gadget. So this construction
simulates a 3-port self-closing door.

\begin{figure}
\centering
\subfloat[closed]{\includegraphics[width=0.35\textwidth]{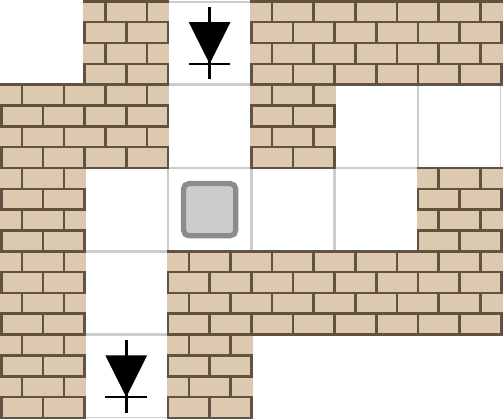}}
\hfill
\subfloat[Open]{\includegraphics[width=0.35\textwidth]{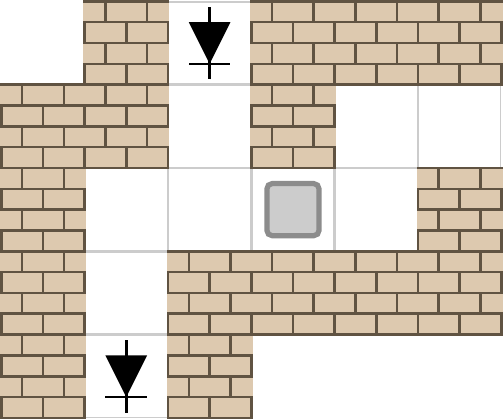}}
\caption{A 3-port self-closing door in \PullkFG[$k$][!].}
\label{fig:pull!-kFG-scd}
\end{figure}

Because the player never has the opportunity to pull multiple blocks, this reduction works for all $k\ge1$ including $k=*$.
\end{proof}

\section{\texorpdfstring{\PullkFG[1]}{Pull?-1FG} is NP-hard}
\label{sec:Pull1FG NP}

In this section, we show \NP-hardness for \PullkFG[1] by reducing from
1-player planar motion planning with the crossing NAND gadget from \cite{doors}.
A \emph{crossing NAND gadget} is a three-state gadget with two crossing
tunnels, where traversing either tunnel permanently closes the other tunnel. 
1-player planar motion planning with the crossing NAND gadget is NP-hard
in \cite[Lemma~4.9]{doors} based on the constructions in
\cite{demaine2003pushing,friedman2002pushing} which originally reduce from \PTC.

\begin{theorem}
	\label{thm:pull-1FG-NP-hard}
	\PullkFG[1] is \NP-hard.
\end{theorem}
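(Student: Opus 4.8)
The plan is to reduce from 1-player planar motion planning with the crossing NAND gadget, which is NP-hard by \cite[Lemma~4.9]{doors}. An instance of that problem is a planar network of crossing NAND gadgets with a designated start and goal location, so it suffices to realize a single crossing NAND gadget out of fixed and movable blocks in \PullkFG[1], together with connectors: straight and turning corridors of empty cells, walled off by fixed blocks, through which the gravity-unaffected agent can walk freely in any direction. The reduction then replaces each abstract gadget by its block implementation and each connection by a corridor, and places the block-goal tile at the image of the abstract goal; the agent reaches the goal tile if and only if it can reach the abstract goal location.

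The core of the construction is a gadget with two corridors that physically cross in the plane, one running horizontally (tunnel A) and one running vertically (tunnel B), meeting in a central region. The key mechanism exploits gravity to make closing irreversible: to make progress across a tunnel, the agent is forced at a chokepoint to pull a single movable block out of its supporting position, after which gravity drops that block into a one-wide vertical section of the \emph{perpendicular} tunnel, where it plugs the tunnel permanently. Permanence holds because the agent can never raise a block: a pull in the upward direction leaves the block unsupported, so the subsequent gravity step returns it to where it started, and a block resting on the floor of a walled one-wide shaft can be moved neither down (floor) nor up (no support) nor sideways (walls), so no sequence of strength-1 pulls can extract it. Thus the first traversal of either tunnel permanently blocks the other, exactly realizing the crossing NAND.

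It remains to verify soundness and completeness against the several ways the agent might try to cheat, and this is where I expect the real work to lie. I would argue: (i) before any traversal both tunnels are clear, so either may be traversed; (ii) because pulling is \emph{optional}, I must ensure that declining to pull does not let the agent advance, so the chokepoint is arranged so the blocking block occupies the only corridor cell the agent must pass, forcing the pull, and the pull necessarily drops the block into the perpendicular tunnel; (iii) strength~1 is no obstruction, since each forced step has a single block behind the agent, and pulling any block the construction does not intend is never helpful (it either does nothing or traps the agent, as in the unrecoverable-configuration arguments of Section~\ref{sec:no gravity}); and (iv) the agent's free, gravity-independent movement opens no shortcuts, since all non-corridor space is packed with fixed blocks.

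The main obstacle is designing the central crossing so that these properties hold simultaneously: the two tunnels must genuinely cross, each traversal must force exactly one irreversible drop into the other tunnel's vertical chokepoint, and no interplay of optional pulling with falling blocks can either reopen a plugged tunnel or let the agent slip past a block without plugging. Confirming irreversibility, that no clever sequence of upward pulls and re-drops can lift a plugged block back out, and ruling out unintended paths through the crossing region are the parts demanding the most careful case analysis; I would dispatch them in the style of the trap-based arguments for the no-gravity gadgets, showing that any deviation places the agent in an unrecoverable configuration.
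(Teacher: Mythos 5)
You have the right reduction (1-player planar motion planning with the crossing NAND gadget, NP-hard by \cite[Lemma~4.9]{doors}) and the right source of irreversibility (gravity: no sequence of pulls can ever net-raise a block, so a dropped block is stuck), but the proposal stops short of the actual proof. For a gadget reduction the gadget \emph{is} the proof, and you explicitly defer the ``main obstacle'' of designing the central crossing. Your specific plan---each traversal forces a pull that drops a block into a one-wide vertical chokepoint of the \emph{perpendicular} tunnel---is not obviously realizable as stated: since the two tunnels must genuinely cross in the plane, that vertical shaft lies in the shared crossing region that the first traversal itself must pass through, and you give no mechanism ensuring the dropped block lands in the other tunnel rather than your own, nor any mechanism preventing the agent from entering a tunnel at its far end, walking backwards through the crossing (the agent moves freely and pulling is optional), or reusing a port. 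These are exactly the failure modes that a concrete layout must rule out.

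For comparison, the paper first builds an auxiliary \emph{single-use one-way} gadget (traversable once in one direction, then impassable both ways) and places one at each of the four ports; this is what controls entry direction and seals used tunnels, and nothing in your sketch plays this role. The central region is then arranged so that the two legal traversals run \emph{diagonally} (top-left to bottom-right, or top-right to bottom-left): entering forces the agent to pull two blocks that stack in a vertical shaft, the straight-down exit remains blocked by the third block of the stack, and the only escape is to pull one lower block into a slot, opening the opposite diagonal exit. After one traversal the whole gadget is impassable and a later entry by the other tunnel traps the agent---which is a slightly different (and easier to certify) behavior than your claim that the traversed tunnel stays open while only the other closes, a claim you would additionally have to verify. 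To complete your proof you would need to exhibit an explicit block layout and carry out the case analysis you list in items (i)--(iv); as written, the argument is a plan rather than a proof.
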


\begin{proof}

We reduce from 1-player planar motion planning with the crossing NAND gadget
\cite[Lemma~4.9]{doors}.
First we first construct a ``single-use'' one-way gadget, shown in
Figure~\ref{fig:single-one-way}.
This gadget can initially can be crossed in one way, but then becomes
impassable in both directions.

\begin{figure}
\centering
$\vcenter{\hbox{\includegraphics[width=0.3\textwidth]{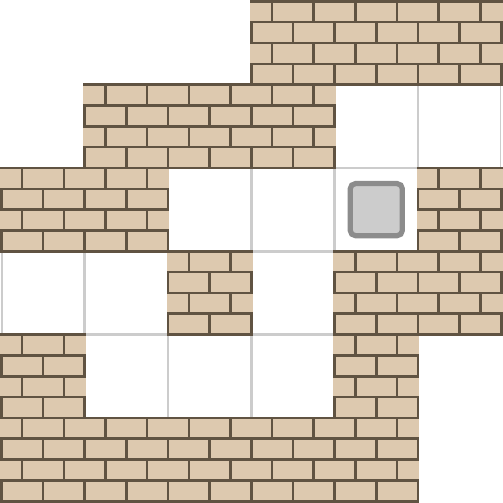}}}
\quad \equiv \quad
\vcenter{\hbox{\includegraphics{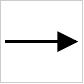}}}$
\caption{Single-use one-way gadget that initially allows traversal from left-to-right and then
    prevents traversal in both directions.}
\label{fig:single-one-way}
\end{figure}

Figure~\ref{fig:crossover} shows our construction of the crossing NAND gadget.
Single-use one-way gadgets enforce that the agent must enter
through one of the top paths.
The agent must pull two blocks to enter the gadget;
these blocks end up stacked in the vertical tunnel on top of the block below.
The agent cannot exit via the bottom tunnel underneath its entry tunnel:
the agent can pull one block into the slot on the bottom, and then can pull
one block one square, but that still leaves the third block of the stack
blocking off the exit path.
The agent cannot exit via the other top path, because it is blocked by the
single-use one-way gadget.
The only path remaining is for the agent to cross diagonally by pulling the
single block in the lower layer into the slot, revealing a path to the exit
opposite where the agent entered.  
After leaving, both the entry tunnel and exit tunnel are impassable
because the single-use one-way gadgets have become impassable.
If the agent later enters via the other entry tunnel, the agent will be trapped,
because it will not be able to leave via the tunnel that was ``collapsed''
in the initial entry.
\end{proof}

\begin{figure}
\centering
\includegraphics[width=0.55\textwidth]{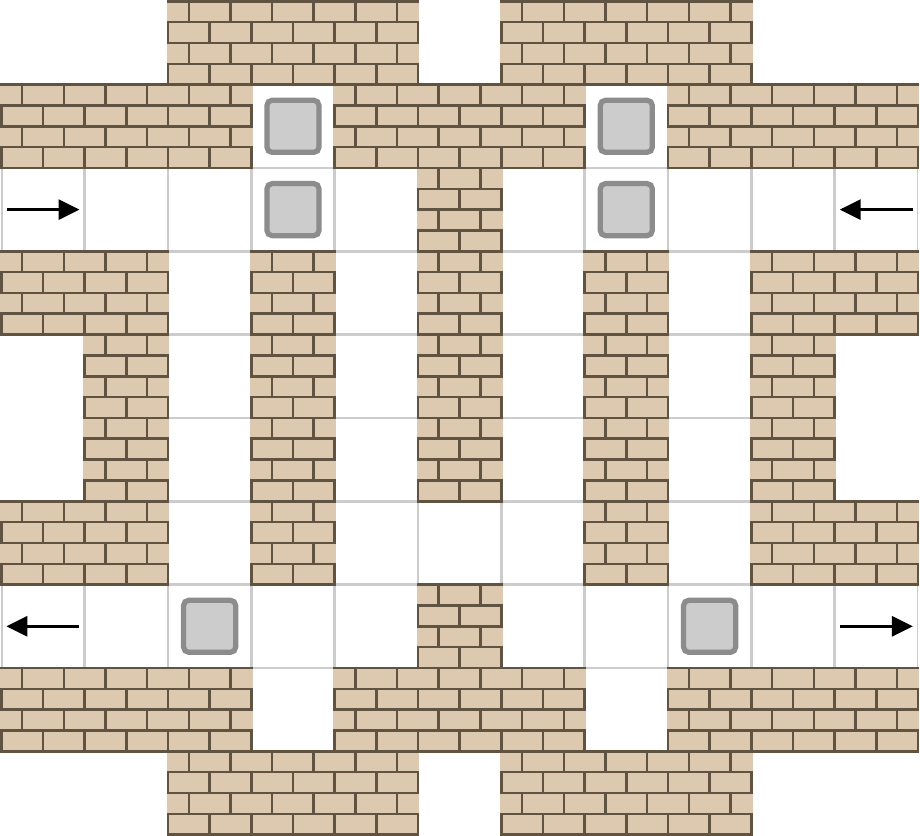}
\caption{Crossing NAND gadget allowing traversal either from the top-left to
  the bottom-right, or from the top-right to the bottom-left.  After being
  traversed once, the entire gadget becomes impassable in any direction.}
\label{fig:crossover}
\end{figure}

We leave open the question of whether \PullkFG[1] is in \NP or \PSPACE-hard.

\section{Open Problems}
\label{sec:Open Problems}
There are several open problems remaining related to the pulling-block problems considered in this paper.
\begin{enumerate}
  \item What is the complexity of \PullkFG[1] (the last remaining problem in Table~\ref{tab:results})? We leave a gap between \NP-hardness and containment in \PSPACE.
  
  \item What is the complexity of pulling-block puzzles without fixed blocks (say, on a rectangular board)? With block pushing, one can generally construct effectively fixed blocks by putting enough blocks together. This technique no longer works in the block-pulling context.
  
  \item Do all of these variants remain \PSPACE-hard when we ask about storage (can the player place blocks covering some set of squares?)\ or reconfiguration (where blocks are distinguishable and must reach a desired configuration) instead of reachability? The storage question for \PullkFG[$k$][?] for $k \geq 1$ and \PullkFG[$*$][?] has been proved PSPACE-hard \cite{PRB16}.
  
  \item What about the studied variants applied to \PushPull (where blocks can be pushed and pulled) and \PullPull (where blocks must be pulled maximally until the robot backs against another block)?  Standard versions are proved PSPACE-complete in \cite{demaine2017push,PRB16}, but variations with mandatory pulling, gravity, and/or no fixed blocks all remain open.
\end{enumerate}

\bibliographystyle{alpha}
\bibliography{thebib}

\appendix
\magicappendix

\end{document}